\pdfoutput=1
\ifx\synctex\undefined\else\synctex=1\fi
\documentclass[runningheads,a4paper,envcountsame,oribibl]{llncs}

\usepackage{microtype}
\usepackage{amssymb}
\usepackage{mathtools}

\RequirePackage[T1]{fontenc}
\RequirePackage[utf8x]{inputenc}
\RequirePackage{amssymb}
\RequirePackage{stmaryrd}

\DeclareUnicodeCharacter{183}{\cdot}
\DeclareUnicodeCharacter{915}{\Gamma}
\DeclareUnicodeCharacter{916}{\Delta}
\DeclareUnicodeCharacter{918}{\Zeta}
\DeclareUnicodeCharacter{920}{\Theta}
\DeclareUnicodeCharacter{923}{\Lambda}
\DeclareUnicodeCharacter{926}{\Xi}
\DeclareUnicodeCharacter{928}{\Pi}
\DeclareUnicodeCharacter{931}{\Sigma}
\DeclareUnicodeCharacter{933}{\Upsilon}
\DeclareUnicodeCharacter{934}{\Phi}
\DeclareUnicodeCharacter{935}{\Chi}
\DeclareUnicodeCharacter{936}{\Psi}
\DeclareUnicodeCharacter{937}{\Omega}
\DeclareUnicodeCharacter{945}{\alpha}
\DeclareUnicodeCharacter{946}{\beta}
\DeclareUnicodeCharacter{947}{\gamma}
\DeclareUnicodeCharacter{948}{\delta}
\DeclareUnicodeCharacter{949}{\epsilon}
\DeclareUnicodeCharacter{950}{\zeta}
\DeclareUnicodeCharacter{951}{\eta}
\DeclareUnicodeCharacter{952}{\theta}
\DeclareUnicodeCharacter{953}{\iota}
\DeclareUnicodeCharacter{954}{\kappa}
\DeclareUnicodeCharacter{955}{\lambda}
\DeclareUnicodeCharacter{956}{\mu}
\DeclareUnicodeCharacter{957}{\nu}
\DeclareUnicodeCharacter{958}{\xi}
\DeclareUnicodeCharacter{959}{\omicron}
\DeclareUnicodeCharacter{960}{\pi}
\DeclareUnicodeCharacter{961}{\rho}
\DeclareUnicodeCharacter{963}{\sigma}
\DeclareUnicodeCharacter{964}{\tau}
\DeclareUnicodeCharacter{965}{\upsilon}
\DeclareUnicodeCharacter{966}{\phi}
\DeclareUnicodeCharacter{967}{\chi}
\DeclareUnicodeCharacter{968}{\psi}
\DeclareUnicodeCharacter{969}{\omega}
\DeclareUnicodeCharacter{8214}{\parallel}
\DeclareUnicodeCharacter{8614}{\mapsto}
\DeclareUnicodeCharacter{8656}{\Leftarrow}
\DeclareUnicodeCharacter{8657}{\Uparrow}
\DeclareUnicodeCharacter{8658}{\Rightarrow}
\DeclareUnicodeCharacter{8659}{\Downarrow}
\DeclareUnicodeCharacter{8797}{\stackrel{\rm\tiny def}{=}}
\DeclareUnicodeCharacter{8870}{\vdash}
\DeclareUnicodeCharacter{8871}{\models}
\DeclareUnicodeCharacter{9121}{\lceil}
\DeclareUnicodeCharacter{9123}{\lfloor}
\DeclareUnicodeCharacter{9124}{\rceil}
\DeclareUnicodeCharacter{9126}{\rfloor}
\DeclareUnicodeCharacter{9675}{\circ}
\DeclareUnicodeCharacter{10178}{\bot}
\DeclareUnicodeCharacter{10214}{\llbracket} %
\DeclareUnicodeCharacter{10215}{\rrbracket} %
\DeclareUnicodeCharacter{10503}{\Mapsto}    %
\DeclareUnicodeCharacter{10971}{\not\hspace{-0.2em}\cap}
\DeclareUnicodeCharacter{65294}{\ldotp}%
\usepackage{paralist}

\usepackage{enumitem}

\usepackage[pdftex,hidelinks]{hyperref}
\usepackage{listings}
\usepackage{courier}

\usepackage{fancyvrb}
\usepackage{tikz}
\usetikzlibrary{arrows,positioning}

\def\len#1{{\vert{#1}\vert}}
\def\pr{R}
\def\tuple#1{{\left\langle #1 \right\rangle}}
\newcommand{\strangeL}{L_{\angle}}

\title{A Language-theoretic View on Network Protocols}
\author{Pierre Ganty\inst{1} \and Boris Köpf\inst{1} \and Pedro Valero\inst{1,2}}
\institute{IMDEA Software Institute, Madrid, Spain \and Universidad Politécnica de Madrid, Spain \email{\{pierre.ganty,boris.koepf,pedro.valero\}@imdea.org} }

\begin{document}

\pagestyle{headings}  %

\maketitle

\begin{abstract}
Input validation is the first line of defense against malformed or malicious inputs.
It is therefore critical that the validator (which is often part of the parser) is free of bugs.

To build dependable input validators, we propose using parser generators for context-free languages. 
In the context of network protocols, various works have pointed at context-free languages as falling short to specify precisely or concisely common idioms found in protocols. 
We review those assessments and perform a rigorous, language-theoretic analysis of several common protocol idioms.
We then demonstrate the practical value of our findings by developing a modular, robust, and efficient input validator for HTTP relying on context-free grammars and regular expressions.
\end{abstract}
 
\section{Introduction}

Input validation, often carried out during parsing, is the first line of defense against malformed or maliciously crafted inputs.
As the following reports demonstrate, bugs make parsers vulnerable, hence prone to attacks:
a bug in the URL parser enabled attackers to recover user credentials from a widely used password manager~\cite{LastPass},
a bug in the RTF parser led to a vulnerability in Word 2010~\cite{MicrosoftAttack}, 
and a lack of input validation in the Bash shell has been used for privilege escalation by a remote attacker~\cite{HTTPAttack} just to cite a few.
To stop the flow of such reports improved approaches for building input validators are needed.

To build dependable input validators, an approach is to rely   
on mature parsing technologies.
As a candidate, consider parser generators for context-free languages (hereafter CFLs).
Their main qualities are: 
\begin{enumerate}
	\item The code for parsing is synthesized automatically from a grammar specification, shifting the risk of programming errors away from the architect of the validator to the designer of the parser generator;
	\item CFL is the most expressive class of languages supported
          by trustworthy implementation of parser generators. Here by
          trustworthy we mean an implementation that either stood the
          test of time like Flex, Bison, or ANTLR, or that has been
          formally verified like the certified implementations of
          Valiant's~\cite{Bernardy2016} and
          CYK~\cite{Firsov2014} algorithms.
\end{enumerate}
Relying on established CFL technology is an asset compared to existing
solutions which are either programmed from scratch~\cite{HTTPolice} or generated from ad hoc
parser generators~\cite{conf/osdi/BangertZ14,conf/imc/PangPSP06}.

However, even though CFL is the most expressive language class with trustworthy parser generators, previous works suggest CFLs are not enough for network messages. 
Specifically, they claim the impossibility or difficulty to specify
precisely the various idioms found in network messages using CFLs. 
For example, some authors argue that CFL are insufficiently expressive because “data fields that are preceded by their actual length (which is common in several network protocols) cannot be expressed in a context-free grammar”~\cite{conf/esorics/DavidsonSDJ09}.
Yet other authors suggest that going beyond CFLs is merely required
for conciseness of expression, because “it is possible to rewrite
these grammars to […] be context-free, but the resulting specification
is much more awkward”~\cite{conf/ndss/BorisovBWDJG07}.  
Surprisingly, the arguments made are not backed up by any
formalization or proof.

In this paper we formally analyze which idioms can and which cannot be
(concisely) specified using CFLs, and we turn the results into
practice by building an input validator for HTTP messages entirely
based on CFL technology.

As the main contributions of our analysis we find out that:
\begin{compactitem}
\item Length fields of {\em bounded size} are finite and hence form a regular language. 
	However, while they can be concisely represented in terms of a context-free grammar, every finite automaton that recognizes them grows exponentially with the bound. 
	In contrast, length fields of {\em unbounded size} cannot even be expressed as a finite intersection of CFLs.
\item Equality tests between words of {\em bounded size} again form a regular language but, as opposed to length fields, they cannot be compactly represented in terms of a context-free grammar. 
	They do, however, allow for a concise representation in terms of a finite intersection of context-free grammars. 
	Specifically, we show that both the grammar and the number of membership checks grows only linearly with the size bound, which has interesting practical implications, see below. 
	As in the case of length fields, equality checks between words of {\em unbounded size} cannot be expressed as a finite intersection of CFLs.
\end{compactitem}

We consider finite intersections of CFLs because they are
\begin{inparaenum}[(a)]
\item strictly more expressive than CFLs, and
\item checking membership in the intersection of CFLs is equivalent to checking membership in each individual CFL.
\end{inparaenum}

These results lead to a principled and modular approach to input
validation: several CFL parsers are run on the input and their boolean
results (whether the input belongs or not to the CFL) are combined
following a predefined logic to decide whether or not the input
message conforms to the standard (that specifies what valid messages
are).

We demonstrate that this approach is practical by implementing a
proof of concept input validator for a large subset of the HTTP
protocol, covering a significant number of the idioms found in network
messages. Our input validator, called HTTPValidator~\cite{HTTPValidator}
draws inspiration from HTTPolice~\cite{HTTPolice}, a state of the art
input validator for HTTP messages built from scratch by the open
source community.  HTTPValidator
is close to achieve feature parity (in terms of checks) with
HTTPolice and offers competitive performance.

\paragraph{Summary of Contributions.} 
In summary, our contributions are both foundational and applied. On the foundational side we perform a language-theoretic analysis of important protocol idioms, making a step towards more rigor in the field. 
On the applied side we show how to implement an input validator for HTTP using off-the-shelf parser generators.

\paragraph{Paper structure.} 
Section~\ref{sec:prelim} introduces basic language-theoretic and input validation concepts, Section~\ref{sec:ContentLength} discusses the case of length fields, including the chunked messages, whereas Section~\ref{sec:Comparisons} considers the case of comparisons.
We show the practicality of our approach in Section~\ref{sec:practicality}, before concluding with related work (Section~\ref{sec:related}), conclusions and future work (Section~\ref{sec:conclusions}).

\section{Preliminaries}\label{sec:prelim}
\paragraph{Language Theory.}
We begin by introducing the language-theoretic context needed for our
development.  An \emph{alphabet} \(Σ\) is a nonempty finite set of
\emph{symbols}. A \emph{word} \(w\) is a finite sequence of symbols of \(Σ\)
where the empty sequence is denoted \(\varepsilon\).  A \emph{language} is a
set of words and the set of all words over \(Σ\) is denoted \(Σ^*\).  We denote
by \(\len{w}\) the \emph{length} of \(w\).  Further define \( (w)_i \) as the
\(i\)-th symbol of \(w\) if \(1 ≤ i ≤ \len{w}\) and \(\varepsilon\)
otherwise.
Given a nonempty subset \(X\) of \(Σ\) and \(i ∈ \mathbb{N}\) define
\(X^i\) as \(\{ w ∈ X^* \mid \len{w} = i \}\).

We assume the reader is familiar with common operations on languages
such as concatenation and boolean combinations. Likewise, we count on
the reader's familiarity with regular languages and finite-state
automata. Yet we next give a description of context-free grammars,
which are the formal basis of our work.

A \emph{context-free grammar} (or grammar for short) is a tuple $G = (V,
Σ, S, \pr)$ where \(V\) is a finite set of \emph{variables} (or
\emph{non-terminals}) including the \emph{start variable} \(S\); \(Σ\) is
an alphabet (or set of \emph{terminals}), $\pr ⊆ V \times (Σ \cup
V)^{*}$ is a finite set of \emph{rules}. We often write \(X → w\) for
a rule \( (X,w) ∈ \pr\).  We define a \emph{step} as the binary relation
$⇒$ on $(V \cup Σ)^{*}$ given by \(u ⇒ v\) if
there exists a rule \(X → w\) of \(G\) such that \( u = \alpha\, X\,
\beta \) and \( v = \alpha\, w\, \beta\) for some \(\alpha,\beta∈
(V\cupΣ)^*\).  Define \(u ⇒^{*} v\) if there exists a \(n ≥ 0\) steps
sequence \(u_0 ⇒ u_1 ⇒ \ldots ⇒ u_n\) such
that \(u_0 = u\) and \(u_n = v\). A step sequence \(u ⇒^* w\) is
called a \emph{derivation} whenever \(u = S\) and \(w∈ Σ^*\).  Define
\(L(G) = \{ w ∈ Σ^* \mid S ⇒^{*} w \} \) and call it the
language generated by \(G\). A language \(L\) is said to be
\emph{context-free}, or {\em CFL}, if there exists a grammar \(G\) such that \(L =
L(G)\).
The {\em size} of a grammar is the sum of the sizes of its production rules \(R\),
that is, it is given by \(\sum_{(X,w)∈ R} 1{+}\len{w} \).

\paragraph{Input Validation.}\label{sec:DataValidation}

In this paper, validating an input means checking whether it belongs to a language.
In particular, no data structure is filled and no information is extracted from the input other than its membership status.
Thus, validating an input $w$ for a language $L$ means deciding whether \(w\) is a member of \(L\) which is also known as the \emph{membership problem}.
To specify \(L\), we use context-free grammars or regular expressions.

\section{Formal Analysis of Length Fields}\label{sec:ContentLength}

Length fields, whose role is to specify the length of subsequent fields, are commonly found in network protocols such as HTTP~\cite{rfc7230}, SIP~\cite{rfc3261}, DNS~\cite{rfc1035} and UDP~\cite{rfc768}.
As an example, consider the following HTTP POST message:
\begin{center}
	\begin{BVerbatim}[fontsize=\small]
POST /1/notification/list HTTP/1.1\r\n
Content-Length: 47\r\n\r\n
{"header":{},"query":{"count":100},"answer":{}}
	\end{BVerbatim}
\end{center}
The length field begins after the
keyword~\verb!Content-Length:! and terminates before the carriage
return/newline~\verb!\r\n!.
Its content, i.e. 47, describes the length of the message body, which is the string coming after the double~\verb!\r\n!.

In this section, we characterize length fields from the point of view of formal language theory.
We begin with a formalization aiming to capture their essence, and then characterize the class of languages specifying them in the bounded and unbounded cases. 
We consider both cases because some protocols, such as DNS and UDP, require length fields to have fixed size, while others, such as HTTP and SIP, have no such restriction.
We conclude by leveraging these results to analyze chunked transfer encoding.

\subsection{Modeling Length Fields}\label{sec:lengthFields}

To model length fields, we will work with formal languages
over an alphabet $Σ$. For the example of HTTP, \(Σ\)
would be the ASCII character set.
\subsubsection{Fixed Size.}
To describe length fields of finite size $n>0$ we define the
language $L_{\mathit{len}}(n)$ over \(Σ=B\cup W\) as follows:
\[L_{\mathit{len}}(n)\stackrel{\rm\scriptscriptstyle def}{=}\{ x \; w \mid x ∈ B^n, w \in W^*, \len{w} = \textstyle{∑_{i=0}^{n-1}}
(x)_{i+1}\cdot b^{i}\} \]
where $B=\{0,\dots,b-1\}$ for an integer $b>1$.
Intuitively, \(L_{\mathit{len}}(n)\) represents the same number twice, using two different encodings: first \(b\)-ary as \(x\) and then unary as \(w\), where the relationship between both encodings is given by \(\len{w}=∑_{i=0}^{n-1} (x)_{i+1}\cdot b^{i}\).
For example, let \(n=3\), \(B=\{\mathtt{0}, \mathtt{1}\}\) and $W=\{\mathtt{a},\mathtt{b}, \mathtt{c}\}$ the word \(\mathtt{110abc}\) consists of the binary
representation of \(3 = (1\cdot 2^0) + (1\cdot 2^1) + (0 \cdot 2^2) \) followed by a word (\(\mathtt{abc}\)) of length \(3\) and, therefore, $\mathtt{110abc} \in L_{\mathit{len}}(3)$.
We choose this unconventional “least significant digit first” to keep notation simple. The results of this section stay valid for the “most significant digit first” convention.

\subsubsection{Unbounded Size.}
For describing length fields of unbounded size, observe that any overlap between the alphabets $W$ and $B$ for describing the body of the message and its length, respectively, introduces ambiguity as to where the length field ends.
A common approach to remove such ambiguities is to use a \emph{delimiter}, which is a special symbol $\sharp$ not occurring in \(x\) whose aim is to separate explicitly the length field from the body of the message.
We extend the definition of $L_{\mathit{len}}(n)$ to account for such delimiters:
\[L^\sharp_{\mathit{len}}(n)\stackrel{\rm\scriptscriptstyle def}{=}\{ x \; \sharp \; w \mid x ∈ B^n, w \in W^*, \len{w} = \textstyle{∑_{i=0}^{n-1}} (x)_{i+1}\cdot b^{i}\} \enspace .\] 

We are now in position to define a language for describing length fields of arbitrary and unbounded size:
\[\textstyle L^{\sharp}_{\mathit{len}} \stackrel{\rm\scriptscriptstyle def}{=} \bigcup_{i>0} L^\sharp_{\mathit{len}}(i) \enspace .\]

Results shown in this section remain valid when there is no overlap between alphabets $W$ and $B$.
In such case the delimiter is no longer needed and removing it from the results in Section \ref{sec:UnboundedContentLength} has no effect on them.

\subsection{Unbounded Length Fields}\label{sec:UnboundedContentLength}

The following theorem shows that length fields of unbounded size cannot be specified using intersection of finitely many CFLs. 
This means that we need to impose restrictions, such as size bounds, in order to specify length fields using CFLs. We will study fixed size length fields in Section~\ref{sec:FixedSizeContentLength}.

\begin{theorem}\label{theorem:NoIntersectionCFL}
$L^{\sharp}_{\mathit{len}}$ is not a finite intersection of CFLs.
\end{theorem}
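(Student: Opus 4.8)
The plan is to argue by contradiction, combining Parikh's theorem with closure of semilinear sets under intersection. Suppose $L^{\sharp}_{\mathit{len}} = \bigcap_{j=1}^{r} L_j$ for CFLs $L_1,\dots,L_r$. The key idea is to extract from this a \emph{bounded} witness language on which the counting information carried by a word reduces to a single pair of exponents; on such a language Parikh's theorem forces semilinearity of each factor, and I would then exhibit an exponential relationship that no semilinear set can express.

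First I would intersect with a suitable bounded regular language. Fix distinct symbols $\mathtt{0},\mathtt{1}\in B$ (available since $b>1$) and a symbol $\mathtt{a}\in W$, and set $R = \mathtt{0}^*\,\mathtt{1}\,\sharp\,\mathtt{a}^*$. Since CFLs are closed under intersection with regular languages, each $L_j\cap R$ is a CFL and $\bigcap_{j}(L_j\cap R) = L^{\sharp}_{\mathit{len}}\cap R$, so a finite intersection of CFLs would survive this operation. Unfolding the definition, a word $\mathtt{0}^k\,\mathtt{1}\,\sharp\,\mathtt{a}^m$ lies in $L^{\sharp}_{\mathit{len}}$ exactly when $m$ equals the value of $\mathtt{0}^k\mathtt{1}$, namely $m=b^{k}$; hence $L^{\sharp}_{\mathit{len}}\cap R = \{\,\mathtt{0}^k\,\mathtt{1}\,\sharp\,\mathtt{a}^{b^k}\mid k\ge 0\,\}$.

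Next I would read off semilinearity for each factor. Every word of $R$ is uniquely determined by the pair $(k,m)$ counting its $\mathtt{0}$'s and $\mathtt{a}$'s, so for any $M\subseteq R$ the Parikh image of $M$ and the associated set $S(M)=\{(k,m)\mid \mathtt{0}^k\,\mathtt{1}\,\sharp\,\mathtt{a}^m\in M\}$ determine one another up to a fixed linear projection, and moreover $S(M_1\cap M_2)=S(M_1)\cap S(M_2)$. By Parikh's theorem the Parikh image of the CFL $L_j\cap R$ is semilinear, hence each $S(L_j\cap R)$ is semilinear; and since semilinear sets are closed under finite intersection, $\bigcap_j S(L_j\cap R) = S\bigl(\bigcap_j (L_j\cap R)\bigr) = \{(k,b^k)\mid k\ge 0\}$ is semilinear as well.

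The contradiction — and the crux of the argument — is that $\{(k,b^k)\mid k\ge 0\}$ is \emph{not} semilinear: projecting onto the second coordinate (projections preserve semilinearity) would make $\{b^k\mid k\ge 0\}=\{1,b,b^2,\dots\}$ an ultimately periodic subset of $\mathbb{N}$, yet its consecutive gaps $b^{k+1}-b^{k}=b^{k}(b-1)$ tend to infinity. The delicate point to get right is precisely that Parikh's theorem must be applied to each individual factor $L_j\cap R$ rather than to their intersection: the intersection is exactly the exponential, non-semilinear language whose existence we are refuting, and the whole force of the proof lies in the mismatch between the semilinearity enjoyed by every CFL factor and the exponential growth demanded of their common intersection. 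Choosing the bounded regular language $R$ so that this exponential relation appears cleanly is the only other step requiring care.
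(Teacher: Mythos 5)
Your proof is correct, and while its skeleton matches the paper's — intersect $L^{\sharp}_{\mathit{len}}$ with a bounded regular witness, pass to vectors of exponents, and contradict semilinearity with exponential growth — the key lemma is genuinely different. The paper's witness is $1^*\sharp^*a^*$, and it invokes a proposition of Latteux: a language $L\subseteq w_1^*\cdots w_k^*$ is a finite intersection of CFLs \emph{if and only if} $f_{\bar w}^{-1}(L)$ is semilinear; non-semilinearity of $\bigl\{\bigl(i,1,\frac{b^i-1}{b-1}\bigr) \mid i\in\mathbb{N}\bigr\}$ is then shown by a hands-on computation with linear sets. You avoid Latteux entirely and re-derive the only direction that is needed from textbook facts: closure of CFLs under intersection with regular languages, Parikh's theorem applied to each factor $L_j\cap R$ \emph{separately}, the bijection between words of $R$ and their count pairs (which is what makes taking count sets commute with intersection — this is exactly where boundedness enters), and closure of semilinear sets under intersection (Ginsburg--Spanier). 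Your non-semilinearity step (project to $\mathbb{N}$, play ultimate periodicity against diverging gaps) is also cleaner than the paper's, as is your witness value $b^k$ versus the paper's $\frac{b^i-1}{b-1}$. What your route buys is self-containedness from standard results; what the paper's route buys is robustness and strength: Latteux's proposition requires no injectivity of the exponent map, and the paper's Lemma~\ref{lemma:L_cap} additionally rules out every \emph{infinite subset} of $\strangeL$, the form of statement it reuses for Theorem~\ref{theorem:constraint}. (Your argument extends to infinite subsets just as easily — an infinite subset of $\{(k,b^k)\}$ still projects to a set with diverging gaps — but you do not say so.)

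One caveat you should make explicit. The paper allows $W$ and $B$ to overlap — that is the very reason the delimiter exists — so your chosen $\mathtt{a}\in W$ may coincide with the digit $\mathtt{0}$. In that case a word of $R$ is \emph{not} determined by its numbers of $\mathtt{0}$'s and $\mathtt{a}$'s (e.g.\ $\mathtt{0}\mathtt{1}\sharp$ and $\mathtt{1}\sharp\mathtt{0}$ have identical counts), so $S(M)$ is no longer a projection of the Parikh image and your intersection-commutation step breaks. The fix is immediate — choose $\mathtt{a}\neq\mathtt{0}$ whenever $W\neq\{\mathtt{0}\}$, and in the degenerate case $W=\{\mathtt{0}\}$ switch to the witness $\mathtt{1}^+\,\sharp\,\mathtt{0}^*$, whose associated value $\frac{b^n-1}{b-1}$ has the same diverging gaps — but as written your construction is not fully general, whereas the paper's appeal to Latteux sidesteps this issue because that proposition never needs the exponent map to be injective.
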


To prove this result, we begin by defining the following subset of $L^{\sharp}_{\mathit{len}}$:
\[\strangeL\stackrel{\rm\scriptscriptstyle def}{=}L^{\sharp}_{\mathit{len}} \cap 1^*\sharp^*a^*\enspace .\]

\begin{lemma}\label{lemma:L_cap}
\(\strangeL\) is not a finite intersection of CFLs.
Moreover, no infinite subset of $\strangeL$ is a finite intersection of CFLs.
\end{lemma}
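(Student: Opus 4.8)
The plan is to first pin down $\strangeL$ concretely. A word $x\,\sharp\,w$ of $L^{\sharp}_{\mathit{len}}$ lies in $1^*\sharp^*a^*$ exactly when $x=1^n$ for some $n>0$ and $w\in a^*$; since then $(x)_{i+1}=1$ for every $i$, the length constraint forces $\len{w}=\sum_{i=0}^{n-1}b^i=(b^n-1)/(b-1)$. Hence
\[\strangeL=\{\,1^n\,\sharp\,a^{(b^n-1)/(b-1)}\mid n>0\,\}\enspace.\]
The salient feature is that the number of trailing $a$'s grows \emph{exponentially} in the number $n$ of leading $1$'s, and this is precisely what will be incompatible with being a finite intersection of CFLs. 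Since $\strangeL$ is itself an infinite subset of $\strangeL$, I would prove the stronger ``moreover'' statement and obtain the first claim for free.

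Next I would turn membership into an arithmetic statement via Parikh's theorem. Suppose, for contradiction, that some infinite $T\subseteq\strangeL$ equals $\bigcap_{i=1}^{m}C_i$ with each $C_i$ a CFL. Intersecting with the regular language $R=1^*\sharp^*a^*$ and using that CFLs are closed under intersection with regular languages, set $C_i'=C_i\cap R$; each $C_i'$ is again context-free and, because $T\subseteq R$, one still has $T=\bigcap_{i=1}^{m}C_i'$. Every word of $R$ has the form $1^p\sharp^q a^r$ and is uniquely determined by the triple $(p,q,r)$, so the Parikh map $\Psi$ is a bijection between $R$ and $\mathbb{N}^3$; consequently $\Psi(\bigcap_i C_i')=\bigcap_i\Psi(C_i')$. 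By Parikh's theorem each $\Psi(C_i')$ is semilinear, and semilinear sets are closed under intersection, so $\Psi(T)$ is semilinear.

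The crux---and the step I expect to be the main obstacle---is to show that $\Psi(T)$ is in fact \emph{not} semilinear, which gives the contradiction. Writing $T=\{1^n\sharp a^{r_n}\mid n\in I\}$ for some infinite $I\subseteq\mathbb{N}_{>0}$ with $r_n=(b^n-1)/(b-1)$, its Parikh image is $\{(n,1,r_n)\mid n\in I\}$. If this were semilinear it would be a finite union of linear sets, and since $I$ is infinite at least one such linear set $L_j$ would be infinite. Because every point has second coordinate $1$, all period vectors of $L_j$ vanish there, so I may work in the plane of the first and third coordinates. I would then argue that every nontrivial period $(a,c)$ of $L_j$ has $a\ge 1$: a period with $a=0$ and $c>0$ would place two points sharing a first coordinate but differing in the third into $\strangeL$, which is impossible since in $\strangeL$ the first coordinate determines the word. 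This bounds the third coordinate linearly in the first, giving $r\le C\cdot n+D$ on $L_j$ for constants $C,D$ depending only on $L_j$. But the first coordinates of $L_j$ are unbounded, so for large $n\in I$ the exponential value $r_n=(b^n-1)/(b-1)$ exceeds $C\cdot n+D$---a contradiction. Hence $\Psi(T)$ is not semilinear.

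Since the argument nowhere used more than the infiniteness of $I$, it applies verbatim to $T=\strangeL$ and to every infinite subset of $\strangeL$, establishing both assertions of the lemma simultaneously. The two delicate points to get fully right are the identity $\Psi(\bigcap_i C_i')=\bigcap_i\Psi(C_i')$, which rests on the injectivity of $\Psi$ on $R$, and the linear-growth bound for a linear set constrained to lie on the graph of a function of its first coordinate.
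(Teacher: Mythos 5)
Your proof is correct, and it reaches the same arithmetic core as the paper's --- non-semilinearity of $\{(n,1,\frac{b^n-1}{b-1})\}$ via exponential-versus-linear growth --- but through genuinely different tooling. The paper invokes a result of Latteux (its Proposition~\ref{prop:equivalents}): for bounded languages $L\subseteq w_1^*\cdots w_k^*$, being a finite intersection of CFLs is \emph{equivalent} to semilinearity of $f_{\bar{w}}^{-1}(L)$; you instead re-derive the one direction actually needed from more standard ingredients, namely closure of CFLs under intersection with regular languages, Parikh's theorem, injectivity of the Parikh map on $1^*\sharp^*a^*$, and closure of semilinear sets under intersection (Ginsburg--Spanier). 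That makes your reduction self-contained modulo textbook results, at the cost of invoking the (nontrivial) intersection-closure of semilinear sets, which the paper's citation packages away. The second difference is in the non-semilinearity argument itself: the paper picks two points $\vec{x},\vec{y}$ of an infinite linear set, forms $\vec{z}=\vec{x}+2(\vec{y}-\vec{x})$, and derives $b^{2\delta}=2b^{\delta}-1$, hence $\delta=0$; you instead analyze the period vectors directly (second coordinate forces them to vanish there, functionality of $n\mapsto r_n$ forces positive first coordinate, hence a bound $r\le Cn+D$ contradicting exponential growth). Your variant is arguably more robust: the paper's step ``since $S$ is linear, $\vec{z}\in S$'' is not literally valid for arbitrary pairs of points of a linear set (the implied coefficients $2\lambda_i-\mu_i$ can be negative) and needs $\vec{y}-\vec{x}$ to be taken along a period, which is exactly the normalization your period-based argument builds in from the start. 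Both proofs handle the ``moreover'' clause the same way, by observing the argument only uses infiniteness of the set.
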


The proof argument relies on semilinear sets which we recall next: a subset of \(\mathbb{N}^k\), with $k>0$, is called \emph{semilinear}, if it can be specified as a finite union of linear sets.
A set \(S ⊆ \mathbb{N}^k\) is called \emph{linear} if there exists \(\vec{b} ∈ \mathbb{N}^k\) and a finite subset \(\{\vec{p}_1,…,\vec{p}_m\}\) of \(\mathbb{N}^k\) such that
\noindent\hfill$S = \{ \vec{b} + λ_1\, \vec{p}_1 + ⋯ + λ_m\, \vec{p}_m \mid λ_1,…,λ_m ∈ \mathbb{N} \} \enspace .$\hfill\mbox{}\par

Let $\bar{w}=\tuple{w_1,…,w_k}$ be a tuple of \(k>0\) words, define a mapping $f_{\bar{w}}\colon\mathbb{N}^k\to w_1^*… w_k^*$ by
$f_{\bar{w}}(i_1,…,i_k)=w_1^{i_1} … w_k^{i_k}$, that is, the output of $f_{\bar{w}}$
is a word in which the $i$-th component of $\bar{w}$ is repeated a number of
times that corresponds to the $i$-th input to $f_{\bar{w}}$. We define the
preimage of $f_{\bar{w}}$ and liftings of $f_{\bar{w}}$ from elements to subsets of
$\mathbb{N}^k$ in the natural way.

The following result by Latteux~\cite{Latteux_1979} establishes a fundamental correspondence between languages given by finite intersection of CFLs and semilinear sets.
\begin{proposition}[{\cite[Prop 7]{Latteux_1979}}]\label{prop:equivalents}
  Let $\bar{w}=\tuple{w_1,…,w_k}$, \(k>0\),
  and \(L ⊆ w_1^* … w_k^*\):
  $f_{\bar{w}}^{-1}(L)$ is semilinear if and only if $L$ is a finite intersection of CFLs.
\end{proposition}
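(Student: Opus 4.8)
The plan is to prove the two implications separately, in both cases routing through the Parikh map and the standard dictionary between semilinear sets and Presburger arithmetic. Throughout, let $\Psi$ denote the Parikh map sending a word over a fresh tag alphabet $\{a_1,\dots,a_k\}$ to its vector of letter-counts, and let $h$ be the homomorphism $a_t\mapsto w_t$, so that $f_{\bar w}(\vec c)=h(a_1^{c_1}\cdots a_k^{c_k})$. The key elementary observation, used in both directions, is that $\Psi$ restricts to a \emph{bijection} from $a_1^*\cdots a_k^*$ onto $\mathbb{N}^k$; this lets me translate freely between tuples in $\mathbb{N}^k$ and tag-words, regardless of whether $f_{\bar w}$ itself is injective.

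For the direction ``finite intersection of CFLs $\Rightarrow$ preimage semilinear'': since $f_{\bar w}^{-1}(L_1\cap\cdots\cap L_n)=\bigcap_j f_{\bar w}^{-1}(L_j)$ and semilinear sets are closed under intersection, it suffices to treat a single CFL $M$. Here I would show $f_{\bar w}^{-1}(M)=\Psi\big(h^{-1}(M)\cap a_1^*\cdots a_k^*\big)$: unfolding the definitions, a tuple $\vec c$ lies on the right exactly when $h(a_1^{c_1}\cdots a_k^{c_k})=w_1^{c_1}\cdots w_k^{c_k}\in M$, i.e.\ when $\vec c\in f_{\bar w}^{-1}(M)$. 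Since CFLs are closed under inverse homomorphism and under intersection with regular languages, $h^{-1}(M)\cap a_1^*\cdots a_k^*$ is a CFL, so its Parikh image is semilinear by Parikh's theorem. This direction is clean and needs no injectivity hypothesis.

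For the converse ``preimage semilinear $\Rightarrow$ $L$ a finite intersection of CFLs'', I would first record that the class of finite intersections of CFLs is closed under finite union: distributing $\bigl(\bigcap_i A_i\bigr)\cup\bigl(\bigcap_j B_j\bigr)=\bigcap_{i,j}(A_i\cup B_j)$ turns a union into an intersection of unions-of-two-CFLs, which are themselves CFLs. Writing $S:=f_{\bar w}^{-1}(L)$ and using $L\subseteq w_1^*\cdots w_k^*$ one has $L=f_{\bar w}(S)$. Invoking the Ginsburg--Spanier theorem I would replace the semilinear set $S$ by an equivalent Presburger formula and push all negations to the atoms (the negation of a threshold atom is again a threshold atom, that of a congruence a finite disjunction of congruences), so that $S$ is described by a positive Boolean combination of \emph{threshold} atoms $\sum_t\alpha_t c_t\le\beta$ and \emph{congruence} atoms $\sum_t\alpha_t c_t\equiv r\pmod d$ with integer constants. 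By the union/intersection closure just noted it then suffices to realise each atom's language $\{w_1^{c_1}\cdots w_k^{c_k}\mid\vec c\text{ satisfies the atom}\}$ as a single CFL. A congruence atom yields a \emph{regular} language: a finite automaton scans the structured input $w_1^*\cdots w_k^*$, tracks the current block index, and maintains $\sum_t\alpha_t c_t\bmod d$, incrementing by $\alpha_t$ each time a copy of the fixed word $w_t$ is completed. A threshold atom yields a \emph{one-counter} CFL: the same scanner maintains the integer value $\sum_t\alpha_t c_t-\beta$, storing its magnitude on the stack and its sign in the finite control, and accepts iff the final value is $\le 0$.

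The hard part will be this converse direction, and within it the genuine subtlety is that $f_{\bar w}$ need not be injective, so a single input word may admit several block decompositions $w_1^{c_1}\cdots w_k^{c_k}$ with different exponent tuples. If the atom-checkers each used an arbitrary decomposition, an intersection would only guarantee that \emph{some} decomposition satisfies each atom, not that \emph{one} decomposition satisfies all of them. I would resolve this by fixing, once and for all, a canonical decomposition computed by a deterministic finite-state strategy over the regular language $w_1^*\cdots w_k^*$ (at each step following a transition that still leads to acceptance, breaking ties by a fixed priority), and having every atom-checker use exactly this canonical parse; folding the strategy into the finite control keeps the constructions regular resp.\ one-counter. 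Because $S=f_{\bar w}^{-1}(L)$ is saturated---all preimages of a given word agree on membership in $S$---evaluating the atoms along the canonical parse computes membership in $L$ correctly, and the construction exhibits $L$ as the finite intersection of the per-atom CFLs (intersected with the regular language $w_1^*\cdots w_k^*$). The remaining work is the routine verification of the counter and automaton gadgets.
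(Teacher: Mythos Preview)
First, note that the paper does not prove this proposition: it is quoted from Latteux and used as a black box, so there is no in-paper proof to compare against. Your forward direction is correct and standard.

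For the converse you rightly identify non-injectivity of $f_{\bar w}$ as the crux and propose the right shape of fix: a canonical parse $\sigma$ shared by all atom-checkers, so that saturation of $S=f_{\bar w}^{-1}(L)$ makes the Boolean combination of the per-atom languages equal to $L$. The gap is that your construction of $\sigma$ does not work. A ``deterministic finite-state strategy \dots\ following a transition that still leads to acceptance'' is ill-defined online, because whether a transition leads to acceptance depends on the unread suffix. Concretely, take $w_1=a$, $w_2=ab$: after reading a single $a$, any left-to-right priority rule over the natural NFA must commit either to block~$1$ or to the first letter of a copy of $w_2$; the first choice dead-ends on input $ab$, the second on input $aa$, yet both words lie in $w_1^{*}w_2^{*}$. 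Greedy right-to-left fails too (e.g.\ $w_1=ba$, $w_2=a$ on input $ba$: stripping the maximal $a^{*}$-suffix leaves $b\notin(ba)^{*}$). So no simple priority scheme yields a canonical parse, and ``folding the strategy into the finite control'' is not the routine step you suggest.

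The hole is repairable, but by a theorem rather than a one-liner. The parse relation $\{(u,\,a_1^{c_1}\cdots a_k^{c_k}) : u=w_1^{c_1}\cdots w_k^{c_k}\}$ is rational, and the rational uniformisation (cross-section) theorem guarantees a rational \emph{function} $\sigma$ with the same domain, realised by an unambiguous finite transducer. With this $\sigma$, each per-atom language $L_A=\sigma^{-1}(V_A)$ is a CFL by closure of CFLs under inverse rational transduction (where $V_A\subseteq a_1^{*}\cdots a_k^{*}$ is regular for a congruence atom and one-counter for a threshold atom), and the remainder of your argument goes through unchanged. You should invoke uniformisation explicitly; alternatively, Latteux's original route goes through the Ginsburg--Spanier characterisation of bounded CFLs via stratified semilinear sets and avoids the parse issue altogether.
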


Now we meet the requirements to prove Lemma \ref{lemma:L_cap}.
\begin{proof}[Sketch]
The proof of Lemma~\ref{lemma:L_cap} relies on the observation that
\[\strangeL = \{ 1^n \; \sharp \; a^{\mathit{val}}
\mid  \mathit{val} = \textstyle{∑_{i=0}^{n-1}} b^{i}\} \enspace .\]
Let $\bar{w}=\tuple{1,\sharp,a}$, since \( ∑_{i=0}^{n-1}  b^{i} = \frac{b^{n}-1}{b-1}
\) for all \(b>1\), we obtain:
\begin{equation}\label{eq:LinearSetFormula}
	f_{\bar{w}}^{-1}(\strangeL) = \left\{\bigl(i,1,\textstyle{\frac{b^{i}-1}{b-1}}\bigr) \mid i\ ∈ \mathbb{N} \right\}\enspace .
\end{equation}

Next, we show this set is not semilinear using the facts that
\begin{inparaenum}[(a)]
\item the third component grows exponentially in $i$, and
\item $f_{\bar{w}}^{-1}(\strangeL)$ is an infinite set.
\end{inparaenum}
The definition of semilinear set then shows that by taking two elements in (b) we can obtain a third one.
We then show that those three elements violate (a) unless they all coincide.
The same reasoning remains valid when considering an infinite subset of \(\strangeL\).
The details of the proof are given in Appendix~\ref{sec:proofs}.
\qed
\end{proof}

Once Lemma \ref{lemma:L_cap} is proved, the proof of Theorem \ref{theorem:NoIntersectionCFL} easily follows. 
\begin{proof}[of Theorem~\ref{theorem:NoIntersectionCFL}]
Assume $L^{\sharp}_{\mathit{len}}$ is a finite intersection of CFLs.
Since \(1^*\sharp^* a^*\) is a CFL, $\strangeL$ is also a finite intersection of CFLs contradicting Lemma~\ref{lemma:L_cap}.
\qed
\end{proof}

Our definitions of $L_{\mathit{len}}(n)$ and $L^{\sharp}_{\mathit{len}}$ do not put any constraints on the structure of the word \(w\) that follows the length field and the delimiter (if any).
In practice, however, the word \(w\) may need to satisfy constraints beyond those on its length, such as containment in a specific language.

\begin{theorem}\label{theorem:constraint}
  The language $L^{\sharp}_{\mathit{len}} \cap \{ x \, \sharp\, w \mid x \in B^*,\; w \in L_C \}$ is a finite intersection of CFLs for \textbf{no} infinite CFL \(L_C \subseteq W^*\).
\end{theorem}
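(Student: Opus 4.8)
The plan is to argue by contradiction and to reduce, by intersecting with a suitable bounded regular language, to a non-semilinear preimage, mirroring the structure of the proof of Lemma~\ref{lemma:L_cap} and of Theorem~\ref{theorem:NoIntersectionCFL}. Write $M$ for the language in the statement, whose members are the words $x\,\sharp\,w$ with $x\in B^*$, $w\in L_C$, and $\len{w}=\mathit{val}(x)$ (where $\mathit{val}(x)=\sum_{i=0}^{\len{x}-1}(x)_{i+1}b^{i}$), and suppose $M$ is a finite intersection of CFLs. The first ingredient is the shape of $L_C$: since $L_C$ is an infinite CFL, the pumping lemma yields a decomposition $w_k=\alpha\beta^{k}\gamma\delta^{k}\epsilon\in L_C$ for all $k\geq 0$ with $p:=\len{\beta}+\len{\delta}\geq 1$, so that $\len{w_k}=c+kp$ for $c:=\len{\alpha\gamma\epsilon}$, and all the $w_k$ lie in the single bounded language $\Gamma:=\alpha\beta^{*}\gamma\delta^{*}\epsilon$. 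Thus $L_C$ contains a word of every length in the arithmetic progression $\{\,c+kp\mid k\geq 0\,\}$, and these witnesses are confined to one bounded language.

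The second ingredient matches this progression against exponentially growing values of the length field. I would use the $x$-pattern $x_j:=\rho\,0^{j}\,\mathtt{1}$ over $B$, for which $\mathit{val}(x_j)=\mathit{val}(\rho)+b^{\,\len{\rho}+j}$ grows exponentially in $j$. A short modular-arithmetic argument then fixes $\rho$ so that $\mathit{val}(x_j)\equiv c\pmod p$ for all $j$ in an infinite arithmetic index set $J=j_0+t\mathbb{N}$: one takes $\len{\rho}$ large enough to represent every residue, picks a residue $\pi$ attained infinitely often by $b^{\,\len{\rho}+j}\bmod p$, and chooses the digits of $\rho$ so that $\mathit{val}(\rho)\equiv c-\pi\pmod p$. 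For each $j\in J$ (with $j$ large enough that $\mathit{val}(x_j)\geq c$) there is then a unique $k_j=(\mathit{val}(x_j)-c)/p$, which is exponential in $j$, with $\len{w_{k_j}}=\mathit{val}(x_j)$, so that $x_j\,\sharp\,w_{k_j}\in M$.

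I would then set $R:=\rho\,0^{*}\,\mathtt{1}\,\sharp\,\Gamma$, which is regular; hence $M\cap R$ is again a finite intersection of CFLs and is contained in $w_1^{*}\cdots w_9^{*}$ for $\bar w=\tuple{\rho,0,\mathtt{1},\sharp,\alpha,\beta,\gamma,\delta,\epsilon}$. By Proposition~\ref{prop:equivalents}, $f_{\bar w}^{-1}(M\cap R)$ is semilinear. Since the delimiter $\sharp\notin B\cup W$ separates the two parts uniquely and the blocks are chosen to admit unique factorisations, I can discard the coordinates forced to $1$ and keep the exponents $(j,k,l)$ of $0$, $\beta$, and $\delta$. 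Intersecting with the semilinear diagonal $\{k=l\}$ picks out precisely the words $\rho\,0^{j}\mathtt{1}\,\sharp\,w_k$: on this diagonal $w_k\in L_C$ holds for all $k$ by pumping, so the only surviving constraint is the length equation $\mathit{val}(x_j)=c+kp$, which has a solution exactly when $j\in J$ and then forces $k=k_j$. Hence the slice equals exactly $\{(j,k_j)\mid j\in J\}$, and since semilinear sets are closed under intersection and projection it would have to be semilinear.

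The contradiction lives in the final step and is the same phenomenon already exploited in Lemma~\ref{lemma:L_cap}: because $k_j$ grows exponentially in $j$ along $J=j_0+t\mathbb{N}$, its projection onto the $k$-coordinate is a geometric progression of the form $\{c_1+c_2\,b^{\,ti}\mid i\geq 0\}$ with ratio $b^{\,t}>1$, which is not ultimately periodic and hence not semilinear, contradicting semilinearity of the slice and therefore the assumption on $M$. I expect the main obstacle to be exactly the coupling handled in the middle two steps: for an arbitrary $L_C$, whose length set is known only to be ultimately periodic, one must force the exponentially growing values $\mathit{val}(x_j)$ to land inside that length set while keeping all the witnesses within one bounded language, and then extract a \emph{clean} non-semilinear set rather than a noisy superset. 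Both tasks are resolved above by the modular choice of $\rho$ and by slicing with the diagonal together with the uniqueness of the delimiter $\sharp$.
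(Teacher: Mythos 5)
Your proposal follows the same skeleton as the paper's proof: extract from the infinite CFL $L_C$ a pumped family $\alpha\beta^k\gamma\delta^k\epsilon$ confined to a bounded language, intersect the hypothetical finite intersection of CFLs with a bounded regular set, invoke Proposition~\ref{prop:equivalents} to obtain a semilinear preimage, and kill semilinearity via exponential growth as in Lemma~\ref{lemma:L_cap}. There is, however, a real difference in the middle, and it is to your credit. The paper intersects with $1^*\sharp^*\alpha_1^*\alpha_2^*\alpha_3^*\alpha_4^*\alpha_5^*$ and asserts that the resulting language $L_{\rightthreetimes}$ is infinite, which is the fact its Lemma~\ref{lemma:constraint} ultimately contradicts. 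That assertion can fail: for $L_C=\{a^i d^i \mid i\geq 0\}$ and $b=2$, every word of $L_C$ has even length while $\mathit{val}(1^n)=2^n-1$ is odd, so $L_{\rightthreetimes}=\emptyset$, and the empty language is trivially a finite intersection of CFLs. Your modular-arithmetic choice of the prefix $\rho\,0^{j}\,1$, engineered so that $\mathit{val}(x_j)\equiv c\pmod p$ along an infinite arithmetic set of $j$'s, is exactly what is needed to guarantee infinitely many witnesses; you correctly identified this coupling as the crux, and on this point your proposal is more careful than the paper.

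The gap is in your last step. You claim the diagonal slice equals exactly $\{(j,k_j)\mid j\in J\}$, justified by ``the blocks are chosen to admit unique factorisations.'' But the suffix blocks $\alpha,\beta,\gamma,\delta,\epsilon$ are not yours to choose: they come from pumping $L_C$, and they may be empty or equal to one another. Take $L_C=a^+$, pumped as $\alpha=\beta=a$ and $\gamma=\delta=\epsilon=\varepsilon$. Since $\delta=\varepsilon$, the coordinate $i_8$ never affects the word, so intersecting with $\{i_6=i_8\}$ constrains nothing; and since $f_{\bar w}^{-1}$ collects \emph{all} factorisations, the word with suffix $a^{\mathit{val}(x_j)}$ contributes every pair $(j,k)$ with $k\leq \mathit{val}(x_j)$, by padding with $\alpha$'s. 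The slice is then the region under the exponential graph rather than the graph itself, and its projection onto the $k$-coordinate is all of $\mathbb{N}$ --- perfectly semilinear. So the contradiction as you state it (a geometric progression in the $k$-projection) evaporates for such $L_C$, even though the theorem of course still holds there.

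The repair is to drop the slicing and projection and argue on the whole preimage, in the style of the paper's Lemma~\ref{lemma:constraint}. Every tuple $(i_1,\dots,i_9)\in f_{\bar w}^{-1}(M\cap R)$ has $i_4=1$ and satisfies the single equation
\[
i_5\len{\alpha}+i_6\len{\beta}+i_7\len{\gamma}+i_8\len{\delta}+i_9\len{\epsilon}
\;=\;
\mathit{val}(\rho)+b^{\,i_1\len{\rho}+i_2+i_3-1}\enspace ,
\]
whose left side is linear in the tuple while the right side is exponential in a linear function of the tuple. If the preimage were semilinear, then in each linear set of its decomposition every period must have zero exponent part (otherwise the right side outgrows the left along that period) and consequently zero left-hand part as well; hence periods can be nonzero only on coordinates of empty blocks, so each linear set consists of factorisations of a single word. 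A finite union of such sets covers only finitely many words, so $M\cap R$ would be finite --- contradicting the infinitude that your modular construction of $\rho$ establishes. With that ending substituted for the slicing step, your proposal becomes a correct proof, and indeed one that closes a hole the paper's own argument leaves open.
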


The proof of this Theorem follows the same argument used to prove Theorem \ref{theorem:NoIntersectionCFL}.
Hence, we begin by defining a subset of $L^{\sharp}_{\mathit{len}} \cap \{ x \, \sharp\, w \mid x \in B^*,\; w \in L_C \}$ for which Proposition \ref{prop:equivalents} holds.

Let $S$ be the start symbol of the grammar generating language $L_C$.
Since language $L_C$ is infinite the following must hold: for some non terminal $A$ and $α_i \in W^*$, we have $S  ⇒^* α_1 A α_5; \;  A  ⇒^* α_2 A α_4; \;  A ⇒^* α_3$ with $α_2 \neq ε$ or $α_4 \neq ε$.

It follows that $\{α_1α_2^iα_3α_4^iα_5 \mid i \geq 0\} \subseteq L_C$ and, thus, 
\[L_{\rightthreetimes}\stackrel{\rm\scriptscriptstyle def}{=}L^{\sharp}_{\mathit{len}} \cap \{ x \, \sharp\, w \mid x \in B^*,\; w \in L_C \} \cap 1^*\sharp^*α_1^*α_2^*α_3^*α_4^*α_5^*\]
is an infinite language contained in $1^*\sharp^*α_1^*α_2^*α_3^*α_4^*α_5^*$.

\begin{lemma}\label{lemma:constraint}
Language $L_{\rightthreetimes}$ is not a finite intersection of CFLs. Moreover, no infinite subset of $L_{\rightthreetimes}$ is a finite intersection of CFLs.
\end{lemma}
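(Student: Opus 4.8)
The plan is to mirror the proof of Lemma~\ref{lemma:L_cap}: reduce to a statement about semilinear sets via Proposition~\ref{prop:equivalents}, and then exhibit exponential growth that rules out semilinearity. Since $L_{\rightthreetimes} \subseteq 1^*\sharp^*\alpha_1^*\alpha_2^*\alpha_3^*\alpha_4^*\alpha_5^*$, I set $\bar w = \tuple{1,\sharp,\alpha_1,\alpha_2,\alpha_3,\alpha_4,\alpha_5}$, so that $f_{\bar w}\colon \mathbb{N}^7 \to 1^*\sharp^*\alpha_1^*\cdots\alpha_5^*$. I will prove both sentences at once by showing that for \emph{every} infinite $M \subseteq L_{\rightthreetimes}$ the preimage $f_{\bar w}^{-1}(M)$ is not semilinear; Proposition~\ref{prop:equivalents} then gives that $M$ is not a finite intersection of CFLs, and instantiating $M = L_{\rightthreetimes}$ (which is infinite, as already noted) yields the first claim.

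The difficulty relative to Lemma~\ref{lemma:L_cap} is that $f_{\bar w}$ need not be injective (some $\alpha_i$ may be $\varepsilon$) and $M$ is arbitrary, so $f_{\bar w}^{-1}(M)$ has no clean closed form. I would sidestep this by projecting. Define the homogeneous linear map $g\colon \mathbb{N}^7 \to \mathbb{N}^2$ by $g(j_1,\dots,j_7) = \bigl(j_1,\; j_3\len{\alpha_1}+j_4\len{\alpha_2}+j_5\len{\alpha_3}+j_6\len{\alpha_4}+j_7\len{\alpha_5}\bigr)$, i.e.\ $g$ records the number of leading $1$'s together with the length of the part following the delimiter. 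Every tuple in $f_{\bar w}^{-1}(M)$ names a word of $M \subseteq L^\sharp_{\mathit{len}}$, which contains exactly one $\sharp$ (as $\sharp \notin B\cup W$); hence $j_2 = 1$, the $x$-part is $1^{j_1}$, and the defining length constraint of $L^\sharp_{\mathit{len}}$ forces the suffix length to equal $\sum_{i=0}^{j_1-1} b^i = \tfrac{b^{j_1}-1}{b-1}$. Therefore
\[ g\bigl(f_{\bar w}^{-1}(M)\bigr) \subseteq \Bigl\{\bigl(n,\tfrac{b^n-1}{b-1}\bigr) \;\Bigm|\; n \in \mathbb{N}\Bigr\}\enspace . \]
Moreover this image is infinite: for each fixed $n$ the suffix length $\tfrac{b^n-1}{b-1}$ is fixed, so only finitely many words $1^n\sharp v$ lie in $M$; as $M$ is infinite, infinitely many distinct values of $n$ occur, hence infinitely many distinct first coordinates appear in the image.

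To finish I would invoke two facts. First, semilinear sets are closed under linear images, so if $f_{\bar w}^{-1}(M)$ were semilinear then $g\bigl(f_{\bar w}^{-1}(M)\bigr)$ would be semilinear too. Second, no infinite subset of $\{(n,\tfrac{b^n-1}{b-1}) \mid n \in \mathbb{N}\}$ is semilinear: this is exactly the exponential-growth argument already carried out for~\eqref{eq:LinearSetFormula} in the proof of Lemma~\ref{lemma:L_cap}, since any infinite linear summand would force three points in arithmetic progression, which the strictly convex graph $n \mapsto \tfrac{b^n-1}{b-1}$ cannot contain. These two facts contradict each other, so $f_{\bar w}^{-1}(M)$ is not semilinear, and by Proposition~\ref{prop:equivalents} the infinite set $M$ is not a finite intersection of CFLs. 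The hard part is precisely the passage handled by $g$: arguing that, despite the arbitrary and possibly non-injective structure of $f_{\bar w}^{-1}(M)$, its projection both lands on and exhausts infinitely much of the very exponential curve that defeated semilinearity in Lemma~\ref{lemma:L_cap}.
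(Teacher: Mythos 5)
Your proof is correct and takes essentially the same route as the paper's: both reduce the claim to non-semilinearity of a subset of $\mathbb{N}^7$ via Proposition~\ref{prop:equivalents}, and both kill semilinearity with the arithmetic-progression argument against the strictly convex curve $n \mapsto \frac{b^n-1}{b-1}$, exactly as in Lemma~\ref{lemma:L_cap}. Your projection $g$ is precisely what the paper informally calls ``considering the sum of the last five components'' (weighted by $\len{\alpha_k}$); packaging it as a linear image of a semilinear set, explicitly handling the non-injectivity of $f_{\bar w}$, and quantifying over all infinite subsets $M$ at once is a cleaner and slightly more rigorous rendering of the same argument.
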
 
This Lemma is similar to Lemma \ref{lemma:L_cap} and so is the proof, whose details are given in Appendix \ref{sec:proofs}.
Finally, we proceed to prove Theorem \ref{theorem:constraint} by contradiction.
\begin{proof}[of Theorem~\ref{theorem:constraint}]
Assume that $L^{\sharp}_{\mathit{len}} \cap \{ x \, \sharp\, w \mid x \in B^*,\; w \in L_C \}$ is a finite intersection of CFLs.
Since $1^*\sharp^*α_1^* α_2^*α_3^*α_4^*α_5^*$ is context-free, $L_{\rightthreetimes}$ is also a finite intersection of CFLs, which contradicts Lemma \ref{lemma:constraint}.
\qed
\end{proof}

\subsection{Fixed Size Length Fields}\label{sec:FixedSizeContentLength}

In this section we sidestep the negative results of Section~\ref{sec:UnboundedContentLength} by assuming an upper bound on the length field which indeed occurs in some network protocols.
Such is the case of the IP, UDP and DNS protocols, whose specifications~\cite{rfc791,rfc1035,rfc768} define 16-bit fields containing the length of the data in terms of bytes.
In some cases, assuming an upper bound on the length field, even if it is not defined by the standard, yields no loss of generality for all practical purposes. 
It is the case for HTTP where the majority of implementations do assume a bound on the size of length fields (e.g. major web browsers all do).

We start with the family of languages 
$L_{\mathit{len}}(n)$ where the length field is \(n\) symbols long. 
It is easy to see that each language of this family is finite, hence regular.
Now we turn to the size of specifications for \(L_{\mathit{len}}(n)\).
In terms of finite state automata, all automata specifying \(L_{\mathit{len}}(n)\) grow exponentially in \(n\).
Let $b>1$ be the base in which the length is encoded, then there are $b^n$ possible encodings for the length.
By the pigeonhole principle, having less than $b^n$ reachable states after reading the first \(n\) symbols implies that two distinct length encodings end up in the same state, making them indistinguishable for the automaton.
Hence, it cannot decide \(L_{\mathit{len}}(n)\).
However, when \(L_{\mathit{len}}(n)\) is specified using context-free grammars, we show that it admits a more compact description.

\begin{theorem}\label{theorem:compactboundedcontentlength}
Let $Σ$ be fixed alphabet and \(n>0\), there exists a context-free grammar \(G_{\mathit{len}}(n)\) of size \(\mathcal{O}(n)\) such that \(L(G_{\mathit{len}}(n))=L_{\mathit{len}}(n)\).
\end{theorem}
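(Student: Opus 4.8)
The plan is to exhibit the grammar $G_{\mathit{len}}(n)$ explicitly. Two tensions must be reconciled. First, the body $w$ can have length up to $b^{n}-1$, i.e.\ exponential in $n$, yet the grammar must stay of size $\mathcal{O}(n)$. Second, the digits of $x$ occur \emph{before} $w$, even though each digit $(x)_{i+1}$ governs a chunk of $w$ whose size $b^{i}$ grows exponentially. I resolve the first tension with the classical observation that a linear-size grammar can generate exponentially long strings by iterated concatenation, and the second by peeling the digits of $x$ one at a time while emitting the matching body chunk at the correct scale.

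First I would introduce \emph{block} non-terminals $C_0,\dots,C_{n-1}$ that generate all $W$-strings of length $b^{i}$, via the rules $C_0 \to t$ for each $t\in W$ and $C_i \to \underbrace{C_{i-1}\cdots C_{i-1}}_{b}$. A straightforward induction gives $L(C_i)=W^{b^{i}}$, since concatenating $b$ arbitrary words of length $b^{i-1}$ yields exactly the words of length $b^{i}$. Each such rule has constant size and there are $\mathcal{O}(n)$ of them, so the blocks contribute $\mathcal{O}(n)$ to the total size.

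Next I would add non-terminals $N_1,\dots,N_{n+1}$ with the intended meaning that $N_k$ generates the words $(x)_k\cdots(x)_n\,u$ with $u\in W^{*}$ and $\len{u}=\sum_{j=k}^{n}(x)_j\,b^{j-1}$. The recursion peels off the digit $(x)_k=d$ and simultaneously appends $d$ copies of $C_{k-1}$ (each of length $b^{k-1}$) to the body: for every $d\in B$ (writing $\underline d$ for the terminal symbol and $d$ for the corresponding count),
\[
N_k \to \underline{d}\; N_{k+1}\; \underbrace{C_{k-1}\cdots C_{k-1}}_{d},
\qquad
N_{n+1} \to \varepsilon ,
\]
with start rule $S \to N_1$. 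Tracing the derivation shows
$N_1 \Rightarrow^{*} (x)_1\cdots(x)_n\,C_{n-1}^{\,(x)_n}\cdots C_0^{\,(x)_1}$,
whose terminal part has digit prefix exactly $x$ and a body of length $\sum_{i=0}^{n-1}(x)_{i+1}b^{i}$; conversely any member of $L_{\mathit{len}}(n)$ is obtained by selecting the rules matching the digits of $x$ and splitting $w$ into blocks of the prescribed sizes. For each of the $n$ values of $k$ there are $b$ rules, each of size $\mathcal{O}(b)$, so the $N_k$ contribute $\mathcal{O}(nb^{2})=\mathcal{O}(n)$, and hence $|G_{\mathit{len}}(n)|=\mathcal{O}(n)$ since $b$ and $|\Sigma|$ are constants.

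The step I expect to be delicate is the bookkeeping of the recursion: I must check that peeling the \emph{least} significant digit first, so that $N_k$ operates at scale $b^{k-1}$ through $C_{k-1}$, keeps the digit prefix in left-to-right order while the body blocks accumulate in reverse, and that these two separated parts end up correctly correlated. Establishing $L(N_1)=L_{\mathit{len}}(n)$ in both directions, together with the induction $L(C_i)=W^{b^{i}}$, is the technical core; the size bound is then a routine count.
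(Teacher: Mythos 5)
Your proof is correct and follows essentially the same construction as the paper's: your block variables $C_i$ (with $L(C_i)=W^{b^i}$) are exactly the paper's $F_i$ (defined by doubling, $F_j \to F_{j-1}F_{j-1}$), and your digit-peeling variables $N_k$ are the paper's $X_i$, which emit one digit and the matching number of blocks, with the body chunks accumulating after the digit prefix. The only difference is that you handle general base $b$ explicitly, whereas the paper presents the case $b=2$ and remarks that the generalization is straightforward.
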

\begin{proof}
For simplicity of presentation we assume that length fields are encoded in binary, i.e. $b=2$ in the definition of $L_{\mathit{len}}(n)$.
The generalization to any \(b>2\) is tedious but straightforward.

The grammar \(G_{\mathit{len}}(n)\) is defined by the alphabet \(Σ\), variables \(\{S\}\cup \{X_i \mid 0\leq i\leq n\} \cup \{F_i \mid 0 \leq i \leq n-1 \} \) and the following rules:
\begin{align*}
	\{ S & → X_{0} \} &	\{	X_{n} & → \varepsilon \}  \\
	\{ X_i & → 0\; X_{i+1} \mid 0 ≤ i < n\} & \{ X_i & → 1\; X_{i+1}\; F_{i} \mid 0 ≤ i < n\} \\
	\{ F_{j} & → F_{j-1}\; F_{j-1} \mid 1≤ j ≤ n-1\} & \{ F_0 & → c \mid c ∈ W \}
\end{align*}

It follows by construction that $L(G_{\mathit{len}}(n))=L_{\mathit{len}}(n)$.
A closer look reveals that, since the alphabet is fixed and therefore so is $\len{Σ} \geq \len{W}$, the size of the rules of each set is bounded and independent from $n$ while there are $3n{+}2{+}\len{W}$ rules so the size of \(G_{\mathit{len}}(n)\)
is $\mathcal{O}(n)$.
\qed
\end{proof}

Next, we show that \(110abc ∈ L_{\mathit{len}}(3)\) is also contained in \(L(G_{\mathit{len}}(3))\).
\begin{multline*}
S ⇒ X_0 ⇒ 1 X_1 F_0 ⇒ 1 1 X_2 F_1 F_0 ⇒ 1 1 0 X_3 F_1 F_0 \\
⇒ 1 1 0 F_1 F_0 ⇒ 1 1 0 F_0 F_0 F_0 ⇒^* 1 1 0 abc
\end{multline*}

\subsection{Chunked Messages}\label{sec:chunkMessages}

Closely related to length fields are chunked messages, a feature found in the HTTP protocol.
According to the standard, the header \verb!Transfer-Encoding: chunked! signals that the body of the message is divided into chunks, each of which has its size defined by a variable size length field as shown below:
\begin{center}
\begin{BVerbatim}[fontsize=\footnotesize]
HTTP/1.1 200 OK\r\n
Transfer-Encoding: chunked\r\n\r\n
12\r\nThe file is \r\n
16\r\n3,400 bytes long\r\n
0\r\n\r\n
	\end{BVerbatim}
\end{center}

Relying on previous definitions we model chunked messages by defining the languages $L_{\mathit{chunk}}^{\sharp} \stackrel{\rm\scriptscriptstyle def}{=} \bigl(L^{\sharp}_{\mathit{len}}\;\{\sharp\}\bigr)^+$ and $L_{\mathit{chunk}}^{\sharp}(n) \stackrel{\rm\scriptscriptstyle def}{=}\bigl(L_{\mathit{len}}(n)\;\{\sharp\} \bigr)^+$ for unbounded and fixed (given by \(n\)) length field size, respectively.
We further assume $\sharp \notin W$ and $Σ=B\cup W \cup \{\sharp\}$ to recognize the end of each chunk and thus avoid ambiguity.

Next, we turn to the claims found in the literature~\cite{conf/esorics/DavidsonSDJ09} about the impossibility of specifying chunked messages using CFLs.
The proofs, which are slight variations of proofs for length fields, can be found in Appendix~\ref{sec:proofs}.

\begin{theorem}\label{theorem:chunked}
$L_{\mathit{chunk}}^{\sharp}$ is not a finite intersection of CFLs.
\end{theorem}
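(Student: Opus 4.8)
The plan is to mirror the proof of Theorem~\ref{theorem:NoIntersectionCFL}: intersect $L_{\mathit{chunk}}^{\sharp}$ with a regular (hence context-free) language chosen so that the result falls within the scope of Latteux's Proposition~\ref{prop:equivalents}, then exhibit a non-semilinear preimage, and finally derive a contradiction via closure of finite intersections of CFLs under intersection with a CFL.

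First I would intersect with $1^*\sharp\, a^*\sharp$, where $1 \in B$ and $a \in W$. The key observation is that each factor of $L^{\sharp}_{\mathit{len}}\{\sharp\}$ contributes exactly two occurrences of $\sharp$ (the internal delimiter of $L^{\sharp}_{\mathit{len}}$ together with the appended one), and since $\sharp \notin B \cup W$, a word of $L_{\mathit{chunk}}^{\sharp}$ containing exactly two $\sharp$'s must consist of a single chunk. Restricting further to $x \in 1^*$ and body in $a^*$ and imposing the length constraint $\len{w} = \sum_{i=0}^{n-1} b^i = \tfrac{b^n-1}{b-1}$, one obtains exactly
\[
  L_{\mathit{chunk}}^{\sharp} \cap 1^*\sharp\, a^*\sharp
  \;=\; \strangeL\cdot\{\sharp\}
  \;=\; \bigl\{\, 1^n\,\sharp\, a^{(b^{n}-1)/(b-1)}\,\sharp \mid n > 0 \,\bigr\}\enspace .
\]

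Next I would apply Proposition~\ref{prop:equivalents} with $\bar{w} = \tuple{1,\sharp,a,\sharp}$, noting that $\strangeL\{\sharp\} \subseteq 1^*\sharp^*a^*\sharp^*$ and that, because every such word carries $n > 0$ and thus $(b^{n}-1)/(b-1) \geq 1$, its decomposition under $\bar{w}$ is unique (the two delimiters are unambiguously placed). This yields
\[
  f_{\bar{w}}^{-1}(\strangeL\{\sharp\})
  \;=\; \Bigl\{ \bigl(i,\,1,\,\tfrac{b^{i}-1}{b-1},\,1\bigr) \mid i \in \mathbb{N} \Bigr\}\enspace ,
\]
which is the set of Equation~\eqref{eq:LinearSetFormula} padded with a constant fourth coordinate. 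The non-semilinearity argument is therefore verbatim the one behind Lemma~\ref{lemma:L_cap}: the third coordinate grows exponentially in $i$ while the first grows only linearly, so picking two elements of an allegedly linear piece and forming a third via the semilinear combination would violate the exponential growth unless all three coincide. Hence the set is not semilinear, and by Proposition~\ref{prop:equivalents} the language $\strangeL\{\sharp\}$ is not a finite intersection of CFLs.

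To conclude I would argue by contradiction exactly as for Theorem~\ref{theorem:NoIntersectionCFL}: were $L_{\mathit{chunk}}^{\sharp}$ a finite intersection of CFLs, then, since this class is closed under intersection with the context-free language $1^*\sharp\, a^*\sharp$, so would be $\strangeL\{\sharp\}$, contradicting the previous paragraph. I expect no serious obstacle: the exponential-versus-linear core is already discharged in Lemma~\ref{lemma:L_cap}, so the only genuinely new step is the bookkeeping showing that the two-$\sharp$ restriction collapses the Kleene plus to a single chunk and reproduces $\strangeL$ up to a trailing delimiter.
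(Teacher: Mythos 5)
Your proposal is correct and takes essentially the same route as the paper's own proof: intersect $L_{\mathit{chunk}}^{\sharp}$ with a regular language of the form $1^*\sharp\, a^*\sharp$ (the paper uses $1^*\sharp^*a^*\sharp^*$, an immaterial difference) to isolate $\strangeL\{\sharp\}$, apply Proposition~\ref{prop:equivalents} with $\bar{w}=\tuple{1,\sharp,a,\sharp}$ to get the preimage $\bigl\{\bigl(i,1,\tfrac{b^{i}-1}{b-1},1\bigr) \mid i \in \mathbb{N}\bigr\}$, and reuse the non-semilinearity argument of Lemma~\ref{lemma:L_cap} to derive the contradiction. Your explicit counting of $\sharp$ occurrences to show the Kleene plus collapses to a single chunk merely spells out what the paper dismisses as an easy observation.
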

\begin{theorem}\label{theorem:economychunked}
Let $Σ$ be a fixed alphabet and \(n>0\). The language $L_{\mathit{chunk}}^{\sharp}(n)$ is regular and can be specified by a context-free grammar of size \(\mathcal{O}(n)\).
\end{theorem}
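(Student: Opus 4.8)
The plan is to establish the two assertions separately, since they rest on different facts: regularity follows from the \emph{finiteness} of $L_{\mathit{len}}(n)$ together with standard closure properties, whereas the $\mathcal{O}(n)$ grammar is obtained by lifting the grammar $G_{\mathit{len}}(n)$ of Theorem~\ref{theorem:compactboundedcontentlength} to its $+$-closure using only a constant amount of extra machinery.

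For regularity, I would first observe that $L_{\mathit{len}}(n)$ is finite. Indeed, the prefix $x$ ranges over the finite set $B^n$, and once $x$ is fixed the required body length $\len{w}=\sum_{i=0}^{n-1}(x)_{i+1}\,b^{i}\le b^{n}-1$ is determined, so $w$ ranges over the finite set $W^{\len{w}}$. A finite language is regular, hence so is $L_{\mathit{len}}(n)\{\sharp\}$, being the concatenation of a regular language with the singleton $\{\sharp\}$. Since regular languages are closed under Kleene plus, $L^{\sharp}_{\mathit{chunk}}(n)=\bigl(L_{\mathit{len}}(n)\{\sharp\}\bigr)^{+}$ is regular. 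Note that this does not contradict the exponential lower bound on finite automata recalled earlier in this section, which concerns the \emph{size} of such automata, not their existence.

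For the compact grammar, I would reuse $G_{\mathit{len}}(n)$ verbatim, keeping its start variable $S$ (which derives exactly $L_{\mathit{len}}(n)$) and all of its rules, and add a single fresh start variable $T$ together with the two rules $T \to S\,\sharp\,T$ and $T \to S\,\sharp$. A straightforward induction on the number of applications of $T \to S\,\sharp\,T$ shows that $T$ derives precisely the words $w_{1}\sharp w_{2}\sharp\cdots w_{k}\sharp$ with $k\ge 1$ and each $w_{j}\in L_{\mathit{len}}(n)$, that is, the language generated is exactly $\bigl(L_{\mathit{len}}(n)\{\sharp\}\bigr)^{+}$. The two new rules have constant size, so the total size equals that of $G_{\mathit{len}}(n)$ plus a constant, which is $\mathcal{O}(n)$ by Theorem~\ref{theorem:compactboundedcontentlength}.

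There is no deep obstacle here; the only points requiring care are (i) checking that the $+$-closure construction forces a trailing $\sharp$ after \emph{every} chunk and at least one chunk, so that neither the empty word nor a word with a missing delimiter is generated, and (ii) confirming that reusing $S$ inside the two $T$-rules leaves the language derived from $S$ unchanged while adding only constant overhead to the grammar size. Both are routine once the rules above are written down, so the real content of the statement is simply the combination of the finiteness argument with the earlier compact grammar of Theorem~\ref{theorem:compactboundedcontentlength}.
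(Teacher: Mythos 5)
Your proposal is correct and matches the paper's own proof essentially verbatim: the paper likewise extends $G_{\mathit{len}}(n)$ with a fresh start variable $Z$ and the two rules $Z \to S\,\sharp\,Z$ and $Z \to S\,\sharp$, and concludes the $\mathcal{O}(n)$ size bound from Theorem~\ref{theorem:compactboundedcontentlength}. Your explicit treatment of regularity (finiteness of $L_{\mathit{len}}(n)$ plus closure under concatenation and Kleene plus) is a welcome addition that the paper leaves implicit, relying on its earlier remark that each $L_{\mathit{len}}(n)$ is finite, hence regular.
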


\section{Formal Analysis of (In)equalities}\label{sec:Comparisons}
Input validation sometimes requires comparing different parts of a
message, e.g., to check that two subwords are identical
or that the first one represents a number or a date that is greater
than the second one. 
For instance, an HTTP GET message is valid only if the field \verb!last-byte-pos! is greater than \verb!first-byte-pos!.

\subsection{Equality Check}\label{sec:equalitychecks}
Consider the case of HTTP when a client is asking for a transition to some other protocol.
As the standard of mandates, equality should hold between the \verb!Upgrade! fields of the request and its response.

\begin{center}
\begin{BVerbatim}[fontsize=\footnotesize]
======== REQUEST ========   ======== RESPONSE ========
GET /example HTTP/1.1\r\n   HTTP/1.1 101 Switching Protocols\r\n
Upgrade: h2c\r\n            Connection: Upgrade\r\n
                            Upgrade: h2c\r\n
\end{BVerbatim}
\end{center}

\subsubsection{Modeling Equality Check.}
We begin our study of comparisons with the case of two contiguous subwords compared for equality.
To this end consider the following language over alphabet \(Σ\) given by
\[ L_{=}^{\sharp} \stackrel{\rm\scriptscriptstyle def}{=} \{x\, \sharp\, y \mid x=y \} \enspace . \]
This language consists of twice the same word with ‘\(\sharp\)’ in between.
Again, we assume \(\sharp\) occurs in \(x\) for no \(x\).

When the size of the words \(x\) and \(y\) is fixed, the delimiter is no longer needed.
Thus, we define \(L_=(n)\)
\[L_=(n) \stackrel{\rm\scriptscriptstyle def}{=} \{x \; y \mid x,y \in Σ^n \; \land \; x=y\} \enspace .\]

\subsubsection{Unbounded Size.}
We now consider the case where the length of the subwords to compare is unbounded.
The example at the top of the section requires, when validating a request-response pair of HTTP messages, to check equality across \verb!Upgrade! fields.

This situation is described by the language $L_{=}^{\sharp}$.
Next, we recall results by Liu and Weiner \cite{Liu1973} and Brough \cite{brough2010groups} enabling us to show that $L_{=}^{\sharp}$ is not a finite intersection of context-free languages.

\begin{proposition}[{\cite[Prop 2.1]{brough2010groups}}]\label{prop:closegsm}
For every $k>0$, the set of languages that are an intersection of $k$ CFLs is closed under
\begin{inparaenum}[\upshape(\itshape i\upshape)]
\item	inverse GSM mappings, and
\item union with context-free languages.
\end{inparaenum}
\end{proposition}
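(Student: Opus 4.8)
The plan is to reduce both closure properties of $k$-fold intersections to the corresponding closure properties of a \emph{single} context-free language, by exploiting how inverse GSM mappings and unions interact with finite intersection. Throughout, I write the given language as $L = \bigcap_{i=1}^{k} L_i$ with each $L_i$ context-free, and I rely on two classical facts about a single CFL: that the class of CFLs is closed under union, and that it is closed under inverse GSM mappings.

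Part~(ii) is the easy direction and I would dispatch it first. Given a context-free language $C$, distributivity of union over intersection yields
\[ L \cup C = \Bigl(\bigcap_{i=1}^{k} L_i\Bigr) \cup C = \bigcap_{i=1}^{k} (L_i \cup C) \enspace . \]
Each $L_i \cup C$ is context-free by closure of CFLs under union, so the right-hand side exhibits $L \cup C$ as an intersection of $k$ CFLs, as required.

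Part~(i) follows the same template but with the commutation identity
\[ g^{-1}\Bigl(\bigcap_{i=1}^{k} L_i\Bigr) = \bigcap_{i=1}^{k} g^{-1}(L_i) \enspace , \]
where $g$ is the GSM mapping and $g^{-1}$ its inverse. Granting this identity, each $g^{-1}(L_i)$ is context-free by closure of CFLs under inverse GSM mappings, and the right-hand side displays $g^{-1}(L)$ as an intersection of $k$ CFLs.

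The main obstacle is establishing this commutation identity, and it is here that the definitional convention for GSMs becomes load-bearing. The inclusion $g^{-1}(\bigcap_i L_i) \subseteq \bigcap_i g^{-1}(L_i)$ holds unconditionally, but the reverse inclusion is delicate: if $g$ were permitted to be multi-valued, membership of a word $w$ in each $g^{-1}(L_i)$ could be witnessed by a \emph{different} output associated with $w$, so no single output need land in every $L_i$ at once. For a single-valued (deterministic) GSM mapping, however, $g^{-1}$ is an ordinary preimage and commutes with every intersection, which is exactly the regime in which the cited result is stated. I would therefore verify that the GSM mappings under consideration are functional---equivalently, that a run is determined by its input so that the input-recovering projection is injective on accepting runs---and treat that verification, rather than any elaborate construction, as the single point requiring care.
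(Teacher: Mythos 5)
The paper gives no proof of this proposition at all---it is imported directly from Brough~\cite[Prop 2.1]{brough2010groups}---and your argument (distributing union and inverse GSM mapping over the $k$-fold intersection, then invoking the closure of a single CFL under each operation) is precisely the standard argument behind the cited result, so your proposal matches the intended proof. Your caveat about single-valuedness is correctly identified as the load-bearing point: the commutation identity $g^{-1}\bigl(\bigcap_{i} L_i\bigr) = \bigcap_{i} g^{-1}(L_i)$ genuinely fails for multi-valued GSM mappings, but the GSMs of the cited result are deterministic, as are both instances to which this paper later applies the proposition (the delimiter-inserting machine $\pi$ in the proof of Theorem~\ref{theorem:Lww} and the letter-permutation homomorphism $\gamma$ in Theorem~\ref{th:preceqnotcfl}), so the identity holds in every use made of it here.
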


\begin{theorem}[{\cite[Thm 8]{Liu1973}}]\label{theorem:Km1Intersection}
	Let $a_1,\ldots a_k$ be \(k>0\) distinct symbols. Then
$L_{(k)}\stackrel{\rm\scriptscriptstyle def}{=}\{a_1^{i_1}a_2^{i_2}\ldots a_k^{i_k}a_1^{i_1}a_2^{i_2}\ldots
 a_k^{i_k} \mid  i_j \geq 0 \text{ for all } j\}$ is an intersection of \(\ell\) context-free languages for \textbf{no} \(\ell<k\).
\end{theorem}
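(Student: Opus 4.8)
The plan is to prove both directions: the easy upper bound that $L_{(k)}$ is an intersection of $k$ CFLs, and the substantive lower bound that $k-1$ never suffice. For the upper bound, for each $j \in \{1,\ldots,k\}$ I would take the language $M_j$ of words of the form $a_1^{i_1}\cdots a_k^{i_k}\,a_1^{i'_1}\cdots a_k^{i'_k}$ (with the $2k$-block structure enforced) that additionally satisfy $i_j = i'_j$. Each $M_j$ is context-free: a pushdown automaton tracks the block boundaries in its finite control, pushes one symbol per letter of the $j$-th block of the first half, pops one per letter of the $j$-th block of the second half, and accepts iff the stack empties exactly and the block order is respected. Since intersecting all $k$ of them enforces every matching simultaneously, $\bigcap_{j=1}^{k} M_j = L_{(k)}$.

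For the lower bound I would argue by contradiction: assume $L_{(k)} = \bigcap_{j=1}^{\ell} L_j$ with $\ell < k$ and each $L_j$ a CFL. Because $L_{(k)} \subseteq R$ for the bounded regular language $R = a_1^*\cdots a_k^*\,a_1^*\cdots a_k^*$, I may replace each $L_j$ by $L_j \cap R$, a bounded CFL, without changing the intersection. Writing $\bar w = \tuple{a_1,\ldots,a_k,a_1,\ldots,a_k}$, the map $f_{\bar w}$ is a bijection onto $R$, and $f_{\bar w}^{-1}(L_{(k)})$ is the diagonal $D = \{(\vec i,\vec i) \mid \vec i \in \mathbb{N}^k\} \subseteq \mathbb{N}^{2k}$, with $D = \bigcap_{j} f_{\bar w}^{-1}(L_j \cap R)$. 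Proposition~\ref{prop:equivalents} tells me each $f_{\bar w}^{-1}(L_j \cap R)$ is semilinear, but mere semilinearity is not enough (it gives no control on $\ell$), so I would invoke the sharper Ginsburg--Spanier characterization that the coordinate set of a bounded CFL is a finite union of \emph{stratified} linear sets, i.e.\ linear sets whose period vectors have support of size at most two and whose supports do not cross within a single component.

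The crux is then a purely combinatorial lemma in $\mathbb{N}^{2k}$. The diagonal $D$ is generated by the $k$ period directions $\vec e_j + \vec e_{j+k}$, whose supports $\{j,j+k\}$ pairwise cross once $k \geq 2$ (for $j<j'$ one has $j < j' < j+k < j'+k$). I would show that a single union of stratified linear sets can pin an equality only along a \emph{non-crossing} family of such chords: forcing two crossing equalities inside one stratified component would demand crossing period supports, which stratification forbids, and would leak an off-diagonal point into that component. Since a non-crossing family can contain at most one chord from a pairwise-crossing collection, carving out all $k$ mutually crossing equalities as an intersection requires $\ell$ distinct non-crossing families covering all $k$ chords, whence $\ell \geq k$, contradicting $\ell < k$.

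The hard part will be this last step: formalizing what it means for a stratified component to \emph{enforce} a given equality and proving rigorously that enforcing two crossing equalities is impossible within one stratified linear set — equivalently, that dropping below $k$ sets necessarily admits some word $a_1^{i_1}\cdots a_k^{i_k}\,a_1^{i'_1}\cdots a_k^{i'_k}$ with $i_j \neq i'_j$ in the intersection. An alternative route that sidesteps stratified sets would be to establish an iterated Ogden-style pumping lemma for intersections of $\ell$ CFLs, where each CFL contributes one pumpable pair of blocks, hence $2\ell$ coupled positions in all, and then to observe that $\ell < k$ coupled pairs cannot preserve all $k$ independent cross-block matchings of $L_{(k)}$ under pumping; here Proposition~\ref{prop:closegsm} is convenient both to normalize the $L_j$ and to reduce $L_{(k)}$ to $L_{(k-1)}$ in an induction on $k$, with the base case $L_{(2)} \notin$ CFL handled by the ordinary pumping lemma.
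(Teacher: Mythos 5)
You should first be aware that the paper itself gives no proof of Theorem~\ref{theorem:Km1Intersection}: it is imported as a black box from Liu and Weiner (their Theorem~8), so your attempt can only be measured against the literature proof. Your high-level route for the lower bound—pass to the bounded regular language $R$, replace each $L_j$ by the bounded CFL $L_j\cap R$, and invoke the Ginsburg--Spanier characterization of bounded CFLs as finite unions of \emph{stratified} linear sets, then exploit the fact that the $k$ diagonal chords $\{j,j+k\}$ pairwise cross—is indeed the spirit of the original argument, and your observation that Proposition~\ref{prop:equivalents} (mere semilinearity) cannot control $\ell$ is correct. The upper bound via the languages $M_j$ is fine but is not part of the statement.

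However, the execution has genuine gaps. First, $f_{\bar w}$ with $\bar w=\tuple{a_1,\ldots,a_k,a_1,\ldots,a_k}$ is \emph{not} a bijection onto $R$: empty blocks make factorizations ambiguous (for $k=2$ the word $a_1a_1\in L_{(2)}$ has preimages $(2,0,0,0)$, $(1,0,1,0)$ and $(0,0,2,0)$), so $f_{\bar w}^{-1}(L_{(k)})$ is strictly larger than the diagonal $D$, and these degenerate off-diagonal points must be dealt with before any structural claim about $D$ can be made. Second, the crux—that one finite \emph{union} of stratified linear sets containing $D$ can ``pin'' at most one of the $k$ crossing equalities—is left as a metaphor: ``enforce'' is never defined, the passage from a single stratified component to a finite union requires a pigeonhole argument (some component must contain diagonal points with all coordinates unbounded, and pumping inside it must produce an off-diagonal point), and the final contradiction needs a violation of some unenforced equality $i_q=i'_q$ that lies in \emph{all} $\ell$ sets simultaneously, which requires coordinating the pumping amounts across the different semilinear descriptions (e.g.\ via common multiples of periods). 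This is where essentially all the work of Liu--Weiner lies, and you explicitly defer it. Third, the fallback route is not viable as sketched: no ``iterated Ogden lemma'' for intersections of CFLs exists off the shelf, and the proposed induction does not close—if $L_{(k)}=\bigcap_{j=1}^{\ell}L_j$ then intersecting with the regular set $a_1^*\cdots a_{k-1}^*a_1^*\cdots a_{k-1}^*$ exhibits $L_{(k-1)}$ as an intersection of the \emph{same} number $\ell$ of CFLs, so the inductive hypothesis yields only $\ell\geq k-1$, one short of the required $\ell\geq k$; that missing increment is precisely the theorem.
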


We are now in position to prove our impossibility result about \(L_{=}^{\sharp}\).

\begin{theorem}\label{theorem:Lww}
$L_=^{\sharp}$ is not a finite intersection of CFLs.
\end{theorem}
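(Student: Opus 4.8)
The plan is to derive the impossibility for $L_=^{\sharp}$ from the already-quoted result of Liu and Weiner (Theorem~\ref{theorem:Km1Intersection}), using the closure properties stated in Proposition~\ref{prop:closegsm}. The key observation is that each language $L_{(k)}$ from Theorem~\ref{theorem:Km1Intersection} is, up to a simple symbol-to-word substitution, an instance of the equality pattern $x\,\sharp\,x$: a word $a_1^{i_1}\cdots a_k^{i_k}a_1^{i_1}\cdots a_k^{i_k}$ is exactly $u\,u$ where $u=a_1^{i_1}\cdots a_k^{i_k}$. So each $L_{(k)}$ should be obtainable from $L_=^{\sharp}$ by an inverse GSM mapping that, on input, reads a block over $\{a_1,\dots,a_k\}$, emits the $\sharp$ delimiter when the two halves meet, and otherwise copies symbols through. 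Since inverse GSM mappings preserve the class of intersections of $k$ CFLs, being a finite intersection of CFLs would propagate from $L_=^{\sharp}$ to every $L_{(k)}$.

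Concretely, first I would suppose for contradiction that $L_=^{\sharp}$ is a finite intersection of CFLs, say an intersection of $\ell$ CFLs for some fixed $\ell$. Next I would fix $k=\ell+1$ distinct symbols $a_1,\dots,a_k$ and exhibit a GSM (a finite-state transducer) whose inverse maps $L_=^{\sharp}$ to $L_{(k)}$, or more precisely maps $L_{(k)}$ into $L_=^{\sharp}$ under the forward direction so that $L_{(k)} \subseteq g^{-1}(L_=^{\sharp})$ and in fact $L_{(k)} = g^{-1}(L_=^{\sharp}) \cap R$ for a suitable regular language $R$ of the form $a_1^*\cdots a_k^* a_1^*\cdots a_k^*$. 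The transducer $g$ inserts the delimiter $\sharp$ at the midpoint and is the identity on the letters; inserting a single symbol at a guessed position is exactly the kind of behaviour a nondeterministic GSM supports. By Proposition~\ref{prop:closegsm}(i), $g^{-1}(L_=^{\sharp})$ is then an intersection of $\ell$ CFLs, and intersecting with the regular (hence context-free, and closure-compatible) language $R$ keeps it an intersection of $\ell$ CFLs. This would make $L_{(k)}=L_{(\ell+1)}$ an intersection of $\ell<k$ CFLs, contradicting Theorem~\ref{theorem:Km1Intersection}.

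I expect the main obstacle to be making the inverse-GSM reduction precise, in particular pinning down the direction of the mapping and the auxiliary regular intersection. One must be careful that a GSM is a one-way device that cannot itself check the nonregular equality of the two halves; the point is precisely that all the ``hard'' comparison work is already baked into $L_=^{\sharp}$, and $g$ only has to bridge the cosmetic differences (the presence of the delimiter, and the restriction of the alphabet to the $a_i$). A clean way to handle the alphabet and the delimiter simultaneously is to choose the GSM so that $g^{-1}$ strips the midpoint $\sharp$ and relabels, and then to restrict by the regular language $a_1^*\cdots a_k^* a_1^*\cdots a_k^*$ so that only words of the rigidly-ordered block form survive; on such words, membership in $L_=^{\sharp}$ forces the two blocks to be identical, which is exactly the defining condition of $L_{(k)}$. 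Once the reduction is set up correctly, the two invocations of Proposition~\ref{prop:closegsm} and the final appeal to Theorem~\ref{theorem:Km1Intersection} close the argument.
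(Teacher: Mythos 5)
Your high-level plan---derive the result from Theorem~\ref{theorem:Km1Intersection} via the closure properties of Proposition~\ref{prop:closegsm}, using the fact that intersecting the copy language with $a_1^*\cdots a_k^*a_1^*\cdots a_k^*$ cuts it down to $L_{(k)}$---is the same as the paper's, and your final counting (take $k=\ell+1$ and contradict Liu--Weiner) is fine. The gap is the one step that carries all the weight: you apply Proposition~\ref{prop:closegsm}(i) to a \emph{nondeterministic} GSM $g$ that guesses where to insert $\sharp$. For such a $g$ one has $g^{-1}(L)=\{w \mid g(w)\cap L\neq\emptyset\}$, and this operation does \emph{not} distribute over intersections: writing $L_=^{\sharp}=\bigcap_{i=1}^{\ell}L_i$, a word lies in $\bigcap_i g^{-1}(L_i)$ as soon as each $L_i$ is hit by \emph{some} insertion position, possibly a different one for each $i$, whereas $g^{-1}\bigl(\bigcap_i L_i\bigr)$ demands a single position witnessing all $\ell$ conjuncts at once. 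So the $\ell$ CFLs assumed for $L_=^{\sharp}$ do not yield $\ell$ CFLs (nor any fixed number of CFLs) for $g^{-1}(L_=^{\sharp})=L_=$; the closure property is only available for \emph{deterministic} GSM mappings, i.e.\ genuine word functions, whose preimages do commute with intersection. In fact the nondeterministic version of the closure property is provably false, and it is falsified by the very language at hand: annotate each position of the $x$-block of $x\,\sharp\,y$ with a guessed symbol; one CFL $D_1$ checks, palindrome-style inside the $x$-block, that the annotation stream is exactly $x$ reversed, and a second CFL $D_2$ checks, palindrome-style around $\sharp$, that the reverse of the annotation stream equals $y$. Erasing the annotations is a letter-to-letter projection---a special case of an inverse nondeterministic GSM mapping---and it maps $D_1\cap D_2$ onto $L_=^{\sharp}$. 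Hence closure under inverse nondeterministic GSM mappings would make $L_=^{\sharp}$ an intersection of two CFLs, contradicting the theorem you are trying to prove.

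This is exactly why the paper's proof is longer than yours. A deterministic one-pass transducer cannot locate the midpoint of an arbitrary word, but on words of the form $a_1^{i_1}\cdots a_k^{i_k}a_1^{i_1}\cdots a_k^{i_k}$ the midpoint is deterministically detectable---it is where the first letter of the word reappears after some different letter has been read---provided at least two distinct letters occur before the midpoint. Accordingly, the paper (a) intersects with the complement of the regular set $\{a_i^j\sharp z \mid 1\leq i\leq k,\ j\geq 0,\ z\in Σ^*\}$ to discard the degenerate words, at the cost of one extra CFL in the count, (b) applies Proposition~\ref{prop:closegsm}(i) to a \emph{deterministic} GSM $\pi$ that inserts $\sharp$ at the detectable point, and (c) reinstates the discarded words $\{(a_ia_i)^j \mid 1\leq i\leq k,\ j\geq 0\}$ by a union with a regular language via Proposition~\ref{prop:closegsm}(ii), ending with an intersection of $m+2$ CFLs and a contradiction for $k>m+2$. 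None of that detour is cosmetic: it is precisely the price of replacing your nondeterministic guess by a deterministic device, and any repair of your argument will essentially reproduce it.
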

\begin{proof}[Sketch.]
For the proof sketch we deliberately ignore the delimiter.
Details about how to deal with it can be found in Appendix \ref{sec:proofs}.

Assume $L_{=}$ (the delimiterless version of \(L_{=}^{\sharp}\)) is an intersection of $m$ CFLs.
Now observe that $L_{(k)}=L_{=}\cap a_1^*a_2^*\ldots a_k^* a_1^*a_2^*\ldots a_k^* $
This implies \(L_{(k)}\) is an intersection of $m+1$ context-free languages, which contradicts Theorem~\ref{theorem:Km1Intersection} for $k>m+1$.
\qed
\end{proof}

\subsubsection{Fixed Size.}
Because of the negative result of Theorem~\ref{theorem:Lww} we turn back again to the restriction assuming an upper bound on the length of the subwords to compare. 
We argue next that, in practice, such a restriction is reasonable. 

Consider the following HTTP message. 
\begin{center}
\begin{BVerbatim}[fontsize=\small]
HTTP/1.1 200 OK\r\n
Date: Sat, 25 Aug 2012 23:34:45 GMT\r\n
Warning: 112 - "Net down" "Sat, 25 Aug 2012 23:34:45 GMT"\r\n
\end{BVerbatim}
\end{center}
The RFC mandates that the date in the \verb!Warning! header be equal to \verb!Date!.
Since date formats have bounded length we immediately have an upper bound of the length of the subwords to compare. 

Another example is given by the MIME protocol which allows to split messages into multiple parts provided they are flanked by a user-defined delimiter string. Let us consider an example:
\begin{center}
\begin{BVerbatim}[fontsize=\small]
MIME-Version: 1.0\r\n
Content-type: multipart/mixed; boundary="Mydelimiter"\r\n\r\n
PREAMBLE to be ignored\r\n--Mydelimiter\r\n
Plain ASCII text.\r\n--Mydelimiter\r\n
Plain ASCII text.\r\n--Mydelimiter--\r\n
EPILOGUE to be ignored.\r\n
\end{BVerbatim}
\end{center}
Observe that the delimiter is first declared, \verb+boundary="Mydelimiter"+, and then \verb+Mydelimiter+ is used three times, the first two times as \verb+--Mydelimiter+ the last time as \verb+--Mydelimiter--+.

Equality checks can ensure each part is flanked with the same delimiter.
In the case of MIME, the standard~\cite{rfc1341} imposes a maximum length of 69 symbols for the delimiter giving us an upper bound.

Equality checks for a fixed number $n$ of symbols are specified by $L_{=}(n)$. 
For every \(n\), the language $L_{=}(n)$ is finite, hence regular.
Nonetheless Theorem~\ref{theorem:Language_square_words}, due to Filmus~\cite{CFLBound}, states that this language has no “compact” specification as a grammar.\footnote{This implies it has no “compact” specification by a finite state automaton either.}
Still it can be represented “compactly” as a finite intersection of CFLs, as shown by Theorem~\ref{theorem:Lwwn}.
In this section we will study the size of different grammars assuming the alphabet $Σ$ is fixed and, thus, $\len{Σ}$ is a constant.

\begin{theorem}[{\cite[Thm 7]{CFLBound}}]\label{theorem:Language_square_words}
Let $\len{Σ}>2$, every context-free grammar for $L_{=}(n)$ has size

\[Ω\left(\frac{\len{Σ}^{n/4}}{\sqrt{2n}} \right) \enspace .\]

\end{theorem}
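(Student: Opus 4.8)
The plan is to establish the lower bound on grammar size by a counting (information-theoretic) argument that compares the number of distinct languages $L_=(n)$ generated by small grammars against the number of words that any such grammar must be able to distinguish. The key observation is that a grammar generating $L_=(n)$ must, in effect, ``remember'' the first half $x$ of each accepted word $xy$ in order to verify $x=y$, and there are $\len{\Sigma}^n$ possible values of $x$. Since the result is attributed to Filmus~\cite{CFLBound}, I would reconstruct the standard incompressibility strategy: bound the number of context-free grammars of a given size $s$, and show that this number must be at least as large as the number of ``hard instances'' that a correct grammar must separately encode, yielding $s = \Omega(\len{\Sigma}^{n/4}/\sqrt{2n})$.

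First I would make precise what it means for a grammar to generate $L_=(n)$ and derive a combinatorial bottleneck. The natural approach is to fix the derivation structure: for each accepted word $xx$ (with $x \in \Sigma^n$), consider a parse tree, and use a pumping- or interchange-style argument to show that the nonterminals appearing ``across the midpoint'' of the word must encode enough information to pin down $x$. Concretely, one shows that if two distinct words $x_1 x_1$ and $x_2 x_2$ shared too much derivational machinery, an interchange of subderivations would produce a word $x_1 x_2 \notin L_=(n)$, contradicting correctness. This forces the grammar to contain many distinct nonterminals or many distinct rules, giving a lower bound on the number of rules as a function of $\len{\Sigma}^n$.

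Next I would carry out the quantitative counting. The square-root factor $\sqrt{2n}$ and the exponent $n/4$ (rather than $n$) strongly suggest that the argument does not charge one nonterminal per value of $x$ directly, but rather exploits that a single nonterminal deriving a terminal string of length up to $2n$ can participate in the generation of many words; the exponent $n/4$ arises from balancing the depth or the position of the ``pumping decomposition'' against the number of words a single subderivation can cover, and the $1/\sqrt{2n}$ factor is the usual central-binomial/Stirling correction counting the ways to split the index budget among the two halves. I would set up the inequality (number of grammars of size $< s$) $\cdot$ (words coverable per grammar) $<$ (total words $\len{\Sigma}^n$) and solve for $s$, tracking the binomial coefficient that produces the $\sqrt{2n}$ denominator.

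The hard part will be making the interchange argument airtight: one must rule out the possibility that the grammar cleverly amortizes the encoding of $x$ across several nonterminals in a way that beats the naive $\len{\Sigma}^n$ count, which is exactly why the honest exponent is $n/4$ and not $n$. Establishing the matching combinatorial identity that yields the precise $\len{\Sigma}^{n/4}/\sqrt{2n}$ form — rather than a cruder $\len{\Sigma}^{\Omega(n)}$ bound — will require careful bookkeeping of how subderivation ``footprints'' overlap, and this is the step I expect to be most delicate. Since the statement cites Filmus~\cite[Thm 7]{CFLBound} directly, I would ultimately defer the full technical development to that reference and present here only the reduction and the counting skeleton that make the stated bound plausible.
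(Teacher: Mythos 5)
The paper itself does not prove this statement: it is imported verbatim from Filmus~\cite[Thm 7]{CFLBound}, so your closing move of deferring the technical development to that reference is exactly what the paper does. Moreover, the interchange idea in your second paragraph is the right germ: Filmus's proof (a generalization of the counting method of Ellul, Krawetz, Shallit and Wang) fixes an \emph{arbitrary} grammar generating $L_{=}(n)$, selects in a parse tree of each square $xx$ a node whose yield length lies roughly in $(n/2, n]$, and observes that two squares whose selected nodes carry the same nonterminal over the same interval of positions could have their subderivations interchanged; since a context with a hole of length at most $n$ determines its filler uniquely within a square (no position of the hole can contain both a symbol and its mirror copy), all squares sharing such a triple (nonterminal, position, length) share the same filler, so at most about $\len{\Sigma}^{n/2}$ squares share any triple, and dividing the $\len{\Sigma}^{n}$ squares by this cap bounds the number of triples from below.

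The skeleton you wrap around this idea, however, is genuinely wrong. Your inequality ``(number of grammars of size $<s$) $\cdot$ (words coverable per grammar) $<$ (total words)'' is an incompressibility argument over a \emph{family} of hard instances, but here there is no family: for each fixed $n$, $L_{=}(n)$ is a single language, and any correct grammar must by itself generate all $\len{\Sigma}^{n}$ squares. Counting how many grammars of a given size exist proves nothing about whether one of them succeeds; you must show that every individual small grammar fails, which is precisely what the per-grammar interchange counting does --- the number of grammars never enters. Your account of the constants is also off: the exponent $n/4$ does not come from balancing a pumping position, and $\sqrt{2n}$ is not a central-binomial/Stirling correction. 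Note that $\len{\Sigma}^{n/4}/\sqrt{2n}$ is literally the square root of $\len{\Sigma}^{n/2}/(2n)$: the interchange counting lower-bounds a quantity (occurrences of nonterminals over positions, after conversion to a normal form) that is \emph{quadratically} related to the size of the original grammar, and taking square roots of the resulting bound $\len{\Sigma}^{n/2}/(2n)$ is what produces the stated form. So your sketch contains the correct mechanism in its middle paragraph, but the framing in which you embed it would not yield a proof, and the quantitative bookkeeping you propose aims at the wrong targets.
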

Recall that $f(n)=\Omega(g(n))$ means that $f$ is bounded from below\footnote{$f(n)=Ω(g(n)) \text{ if{}f } \exists \text{ positive } c,n_0 \text{ s.t. } \forall n>n_0, \;f(n) \geq c\cdot g(n)$} by $g$ for sufficiently large $n$, which implies that context-free grammars for $L_{=}(n)$ exhibit exponential growth in $n$.

Our next theorem based on the observation that \(x=y\) if{}f
\( (x)_i = (y)_i \) for all \(i\) allows to capture \(L_{=}(n)\) as
a intersection of \(n\) CFLs.
\begin{theorem}\label{theorem:Lwwn}
	Let the alphabet \(Σ\) be fixed, the language
$L_{=}(n)$ is an intersection of $n$ CFLs, each of which is specified by a grammar of size $\mathcal{O}(n)$.
\end{theorem}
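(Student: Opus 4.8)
The plan is to exploit the observation stated just before the theorem: a word $z$ of length $2n$ belongs to $L_=(n)$ precisely when, writing $z=x\,y$ with $x,y\in\Sigma^n$, we have $(x)_i=(y)_i$ for every $i\in\{1,\dots,n\}$. Each such equality is a single matching constraint between two positions at fixed distance $n$, and this is exactly the kind of condition one context-free grammar can enforce. So I would introduce, for each $i$, a language $L_i$ capturing ``the word has length $2n$ and its $i$-th symbol equals its $(n{+}i)$-th symbol'', and then prove that $L_=(n)=\bigcap_{i=1}^{n} L_i$, exhibiting $L_=(n)$ as an intersection of $n$ CFLs.

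Concretely I would define
\[ L_i = \{\, u\, c\, v\, c\, w \mid c\in\Sigma,\; u,v,w\in\Sigma^*,\; \len{u}=i-1,\; \len{v}=n-1,\; \len{w}=n-i \,\} \enspace . \]
A length count gives $\len{u\,c\,v\,c\,w}=(i-1)+1+(n-1)+1+(n-i)=2n$, with the first $c$ at position $i$ and the second at position $i+(n-1)+1=n+i$; hence $L_i=\{\, z\in\Sigma^{2n} \mid (z)_i=(z)_{n+i} \,\}$. Intersecting over all $i$ enforces the length constraint together with all $n$ symbol equalities simultaneously, which by the observation above is exactly $L_=(n)$. This establishes the set equality $L_=(n)=\bigcap_{i=1}^{n} L_i$.

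It then remains to give, for each $i$, a grammar of size $\mathcal{O}(n)$ generating $L_i$. The matched pair $c\cdots c$ is the only genuinely context-free part; the three surrounding blocks $u,v,w$ are just $\Sigma^{i-1}$, $\Sigma^{n-1}$, $\Sigma^{n-i}$. I would use one shared chain of variables $P_0,\dots,P_n$ with rules $P_0\to\varepsilon$ and $P_j\to T\,P_{j-1}$, together with $T\to a$ for each $a\in\Sigma$, so that $P_j$ generates exactly $\Sigma^j$; since $\Sigma$ is fixed this chain costs $\mathcal{O}(n)$ rules of bounded size. A variable $M$ with rules $M\to a\,P_{n-1}\,a$ (one per $a\in\Sigma$, hence $\mathcal{O}(1)$ rules) generates the center $\{\, c\,v\,c \mid c\in\Sigma,\ \len{v}=n-1 \,\}$, and the start rule $S_i\to P_{i-1}\,M\,P_{n-i}$ assembles the three pieces. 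As there are $\mathcal{O}(n)$ rules, each of bounded size, each grammar has size $\mathcal{O}(n)$.

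I expect the only real work to be the bookkeeping in the length count that pins the two occurrences of $c$ to positions $i$ and $n+i$, and confirming that reusing the single $P_j$ chain for all three free blocks keeps the grammar linear rather than re-deriving $\Sigma^k$ separately for each block. Neither is a genuine obstacle once positions are tracked carefully; the construction is deliberately arranged so that, as in Theorem~\ref{theorem:compactboundedcontentlength}, the exponential blow-up forced on any single grammar (Theorem~\ref{theorem:Language_square_words}) is traded for a linear number of linear-size grammars.
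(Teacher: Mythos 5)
Your proposal is correct and takes essentially the same approach as the paper: your languages $L_i$ coincide exactly with the paper's $L_{=_i}(n) = \{x\,y \mid x,y \in \Sigma^n,\ (x)_i = (y)_i\}$, and the intersection argument $L_=(n)=\bigcap_{i=1}^{n}L_i$ is identical. The only cosmetic difference is in the grammar construction: the paper uses wide rules $S \to T^{i-1}\, c\, T^{n-1}\, c\, T^{n-i}$ (one rule of length $\mathcal{O}(n)$ per $c\in\Sigma$), whereas you build a chain of variables $P_0,\dots,P_n$ with bounded-size rules; both give grammars of size $\mathcal{O}(n)$.
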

\begin{proof}
	Given \(i\in \{1,…,n\}\), define the language \(L_{=_i}(n)\) over the alphabet \(Σ\) given by
\[ L_{=_i}(n) \stackrel{\rm\scriptscriptstyle def}{=} \{x\, y \mid x,y ∈ Σ^n,\; (x)_i=(y)_i\}\enspace .\]

Clearly, for every word \(u\) we have \(u\in L_{=}(n)\) if{}f \(u \in L_{=_i}(n)\) for all \(i\in\{1,…,n\}\).
Next, define $G_{=_i}$ as the grammar for \(L_{=_i}(n)\) with variables \(S\) and \(T\), alphabet \(Σ\) and the rules:$\{ S  → T^{i-1}\; c\; T^{n-1}\; c\; T^{n-i} \mid c \in Σ\}, \;\;\; \{ T → c \mid c ∈ Σ \}$.
It is routine to check that the size of the grammar is $\mathcal{O}(n)$.
\qed
\end{proof}

Above, we studied specification of equality checks for two contiguous subwords.
In practice, however, comparisons are often more general.
In the previous HTTP example, the dates to compare for equality are not necessarily contiguous.
Also, to specify the split messages of MIME using equality checks we need to generalize to the cases where equality covers more than two subwords (each of the multiple parts is flanked with the same delimiter) and those are not necessarily contiguous (some parts are non empty). 

We show this generalization of equality checks can still be specified concisely by a finite intersection of CFLs. 
For the sake of space, results are deferred to Appendix~\ref{sec:dynamicDelimitersAppendix}.

\subsection{Inequality Checks}

Thus far, we have focused on languages whose words consist of two equal subwords.
However, comparisons sometimes require that the first subword represents a lower number than the second or an earlier date.
The following request is asking for bytes of \verb!BigBuckBunny.mp4! between offsets \verb!2833! and \verb!7026!.
To be valid the requested range should describe a non empty set.
\begin{center}
\begin{BVerbatim}[fontsize=\footnotesize]
GET /BigBuckBunny.mp4\r\n
Range: bytes=2833-7026\r\n
\end{BVerbatim}
\end{center}

\subsubsection{Modeling Inequality Checks.}
Let $\preceq$ define a total order on $Σ$.
We extend \(\preceq\) to $Σ^*$ and denote it \(\preceq^*\) as follows.
We first define \(\preceq^*\) when its arguments have equal length, then we proceed with the general case.

Given \(x, y ∈ Σ^* \) of equal length, let \(p\) be the least position such that \( (x)_p \neq (y)_p\).
Then \( (x)_p \preceq (y)_p \) if{}f \(x \preceq^* y\).
Otherwise (no such position \(p\) exists) we also have \(x \preceq^* y\) since the two words are equal.

Let us now proceed to the case where \(x\) and \(y\) have different length and assume \(x\) is the shortest of the two words (the other case is treated similarly).
Then we have \(x \preceq^* y\) if{}f \( x' \preceq^* y \) where \( x' = \min_{\preceq}(Σ)^{\len{y}-\len{x}} x \), that is \(x'\) is the result of padding \(x\) with the minimal element of \(Σ\) so that the resulting word and \(y\) have equal length.
For instance, \(5 \preceq^* 21 \) because \( 05 \preceq^* 21 \) where \(Σ \) is the set of all digits and \(\preceq\) is defined as expected.
It is an easy exercise to check that \(\preceq^{*}\) is a total order (hint: \(\preceq\) is a total order).

\subsubsection{Unbounded Size.}

Let us turn back to the \verb!Range! field example at the top of the section.
To specify the language of valid ranges, since the two subwords are unbounded, a delimiter is needed to indicate the end of the first word.
In our example the delimiter is the dash symbol.

Next we define $L_{\preceq}^{\sharp}$, the language deciding unbounded size inequality check using \(\sharp\) as a delimiter, as follows:
\[L_{\preceq}^{\sharp} \stackrel{\rm\scriptscriptstyle def}{=} \{ x \;\sharp\; y \mid x,y \in Σ^*, \; x \preceq^* y\}\enspace .\]

\begin{theorem}
The language $L_{\preceq}^{\sharp}$ is not a finite intersection of CFLs.
	\label{th:preceqnotcfl}
\end{theorem}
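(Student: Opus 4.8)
The plan is to reduce the (already settled) equality problem to the inequality problem, deriving a contradiction with Theorem~\ref{theorem:Lww}. The starting point is the antisymmetry of the total order $\preceq^*$: for all words, $x=y$ if{}f $x \preceq^* y$ and $y \preceq^* x$. Hence $L_=^{\sharp}$ is obtained as the intersection of $L_{\preceq}^{\sharp}$ with the ``reverse'' language $\{x\,\sharp\,y \mid y \preceq^* x\}$. So I would assume, toward a contradiction, that $L_{\preceq}^{\sharp}$ is a finite intersection of CFLs and then exhibit $L_=^{\sharp}$ as a finite intersection of CFLs, contradicting Theorem~\ref{theorem:Lww}.

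Concretely, first I would cut down to words in which the two sides have the same length, because only there does the comparison reduce to a clean first-difference rule. The language $E = \{x\,\sharp\,y \mid x,y\in\Sigma^*,\ \len{x}=\len{y}\}$ is context-free, so $P := L_{\preceq}^{\sharp}\cap E$ is still a finite intersection of CFLs (one extra CFL), and it equals $\{x\,\sharp\,y \mid \len{x}=\len{y},\ x\preceq^* y\}$. Next I would apply the order-reversing bijection on $\Sigma$ (the ``complement'' $c \mapsto \bar c$ that inverts $\preceq$) extended to a homomorphism $h$ fixing $\sharp$. Since $h$ is an involutive homomorphism it is in particular a GSM mapping with $h^{-1}(P)=h(P)$, so by closure under inverse GSM mappings (Proposition~\ref{prop:closegsm}) the language $Q := h(P)$ is again a finite intersection of CFLs. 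A symbolwise check shows that on equal-length words $h$ anti-preserves the order, i.e.\ $h(x)\preceq^* h(y)$ if{}f $y\preceq^* x$, whence $Q = \{x\,\sharp\,y \mid \len{x}=\len{y},\ y\preceq^* x\}$. Intersecting, $P\cap Q = \{x\,\sharp\,y \mid \len{x}=\len{y},\ x\preceq^* y \text{ and } y\preceq^* x\} = \{x\,\sharp\,x \mid x\in\Sigma^*\} = L_=^{\sharp}$, now written as a finite intersection of CFLs, the desired contradiction.

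The main obstacle --- and the reason for the equal-length detour --- is the padding convention in the definition of $\preceq^*$: for words of different length the shorter one is padded with the minimal symbol, and complementation sends the minimal symbol to the maximal one, so the anti-isomorphism $h(x)\preceq^* h(y)\iff y\preceq^* x$ simply fails off the equal-length diagonal. Restricting to $E$ sidesteps this, and the residual work is the routine verification that $h$ reverses $\preceq^*$ on equal-length words (via the first position where $x$ and $y$ differ, which coincides with the first position where $h(x)$ and $h(y)$ differ). A minor point to address is the alphabet size: Theorem~\ref{theorem:Lww} supplies the contradiction directly when $\Sigma$ is large enough; for a small fixed $\Sigma$ one can instead encode $k$ abstract symbols as fixed-length binary codewords through an inverse GSM mapping and intersect with $a_1^*\cdots a_k^*\,\sharp\,a_1^*\cdots a_k^*$ to recover Liu and Weiner's $L_{(k)}$, then invoke Theorem~\ref{theorem:Km1Intersection}.
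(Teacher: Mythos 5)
Your proof is correct, and its skeleton is the same as the paper's: an order-reversing involution on $\Sigma$, closure of finite intersections of CFLs under inverse GSM mappings (Proposition~\ref{prop:closegsm}), and antisymmetry to reassemble $L_=^{\sharp}$, contradicting Theorem~\ref{theorem:Lww}. The genuine difference is your preliminary intersection with the context-free language $E=\{x\,\sharp\,y \mid \len{x}=\len{y}\}$, and it buys more than you may realize: it is exactly what makes the argument sound, and the paper's own proof, which omits this step, has a gap at precisely that point. Note first that your opening claim (``for all words, $x=y$ iff $x\preceq^* y$ and $y\preceq^* x$'') is false because of padding: with $0\preceq 1$, both $1\preceq^* 01$ and $01\preceq^* 1$ hold, so $\preceq^*$ is only a total \emph{preorder} on $\Sigma^*$; your proof survives because it invokes antisymmetry only inside $E$, where it does hold. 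The paper instead introduces the dual order $\succeq^*$ with its own padding convention (pad with $\min_{\succeq}(\Sigma)=\max_{\preceq}(\Sigma)$); this does make the correspondence $\gamma(L_{\preceq}^{\sharp})=L_{\succeq}^{\sharp}$ valid on all pairs, including those of unequal length, but its concluding identity $L_=^{\sharp}=L_{\preceq}^{\sharp}\cap L_{\succeq}^{\sharp}$ then fails off the equal-length diagonal: the word $1\,\sharp\,01$ lies in $L_{\preceq}^{\sharp}$ (pad and compare $01\preceq^* 01$) and in $L_{\succeq}^{\sharp}$ (pad and compare $11\succeq^* 01$), yet $1\neq 01$. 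Restricting to $E$, as you do, repairs this at the cost of one extra CFL and loses nothing, since $L_=^{\sharp}\subseteq E$; so your identity $P\cap Q=L_=^{\sharp}$ holds exactly and the contradiction goes through. Your closing caveat about alphabet size is also well taken: the paper's proof of Theorem~\ref{theorem:Lww} needs $k>m+1$ distinct symbols, so for a small fixed $\Sigma$ an encoding such as the one you sketch is genuinely needed, although if one takes the statement of Theorem~\ref{theorem:Lww} as given, your proof requires no such supplement.
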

\begin{proof}
We begin by defining the order $\succeq$ over $Σ$ as \(\preceq^{-1}\) and define $\succeq^*$ by replacing \(\preceq\) with \(\succeq\) in the definition of \(\preceq^*\).
Clearly \(a \preceq b\) if{}f \(b \succeq a\) holds, hence there exists a permutation \(γ\) on \(Σ\).
Indeed, we can write \(Σ\) as the set \( \{a_1,…,a_n\} \) such that \(a_i \preceq a_j\) if{}f \(i \leq j\).
Now define \(γ \colon a_i \mapsto a_{n{+}1{-}i}\).
It follows that \( a \preceq b\) if{}f \(\gamma(a) \succeq \gamma(b)\).
The previous equivalence naturally lifts to words (\(\preceq^*\) and \(\succeq^*\)), e.g. \( v \preceq^* w\) if{}f \(\gamma(v) \succeq^* \gamma(w)\).

Next define
\[L_{\succeq}^{\sharp} \stackrel{\rm\scriptscriptstyle def}{=} \{ x \; \sharp \; y \mid x,y \in Σ^*, \; x \succeq^* y\}\enspace .\]
Notice that the following equality holds: \(L_{\succeq}^{\sharp} = \{\gamma(x)\; \sharp\; \gamma(y) \mid x\;\sharp\; y\in L_{\preceq}^{\sharp} \} \).
Stated equivalently, \(\gamma(L_{\preceq}^{\sharp}) = L_{\succeq}^{\sharp}\) where \(\gamma\) is lifted to be a language homomorphism and also \(L_{\preceq}^{\sharp} = \gamma^{-1}(L_{\succeq}^{\sharp})\) since \(\gamma\) is a bijection.

Following Proposition~\ref{prop:closegsm} \upshape(\itshape i\upshape) finite intersections of CFLs are closed under inverse GSM mapping.
This implies that they are also closed under inverse homomorphism such as \(\gamma^{-1}\).

Assume $L_{\preceq}^{\sharp}$ is a finite intersection of CFLs.
It follows from above that so is $L_{\succeq}^{\sharp}$.
Finally, consider the equivalence \( v = w\) if{}f \(v\preceq^* w\) and \(v \succeq^* w\).
Lifted to the languages the previous equivalence becomes: $L_=^{\sharp} = L_{\preceq}^{\sharp} \cap L_{\succeq}^{\sharp}$.

Since both $L_{\preceq}^{\sharp}$ and $L_{\succeq}^{\sharp}$ are finite intersection of CFLs we conclude that so is \(L_=^{\sharp}\) which
contradicts Theorem~\ref{theorem:Lww}.
\qed
\end{proof}

\subsubsection{Fixed Size.}
With the same motives as for equality checks we turn to the case in which the size of the words to be compared is fixed, say $n$.
As opposed to the unbounded case, we can discard the delimiter
because \(n\) -- the last position of the first word -- is known.
The message below illustrates an inequality check between fixed size subwords.
\begin{center}
\begin{BVerbatim}[fontsize=\small]
HTTP/1.1 304 Not Modified\r\n
Date: Tue, 29 Mar 2016 09:05:57 GMT\r\n
Last-Modified: Wed, 24 Feb 2016 15:23:38 GMT\r\n
\end{BVerbatim}
\end{center}

To ensure that this response is valid the \verb!Last-Modified! field must contain a date earlier than the \verb!Date! field.

Let \(n > 0 \), we define $L_{\preceq}(n)$ to be:
\[ L_{\preceq}(n) \stackrel{\rm\scriptscriptstyle def}{=} \{x\;y \mid x,y \in Σ^n, \; x \preceq^* y\}\enspace .\]

\begin{theorem}\label{theorem:Lleq}
	Let the alphabet \(Σ\) be fixed and \(n>0\), $L_{\preceq}(n)$ is a boolean combination of $2n$ languages each one specified by a grammar of size $\mathcal{O}(n)$.
\end{theorem}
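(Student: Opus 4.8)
The plan is to reduce the lexicographic order $\preceq^*$ on equal-length words to a boolean combination of per-position comparisons, reusing the equality languages of Theorem~\ref{theorem:Lwwn} and introducing analogous strict-inequality languages. First I would recall from Theorem~\ref{theorem:Lwwn} the languages $L_{=_i}(n)=\{x\,y\mid x,y\in Σ^n,\ (x)_i=(y)_i\}$, each specified by a grammar $G_{=_i}$ of size $\mathcal{O}(n)$. Then, for each $i\in\{1,\dots,n\}$, I would introduce the language recording a \emph{strict} comparison at position $i$,
\[
L_{\prec_i}(n)\stackrel{\rm\scriptscriptstyle def}{=}\{x\,y\mid x,y\in Σ^n,\ (x)_i\prec(y)_i\}\enspace,
\]
where $\prec$ is the strict order underlying the total order $\preceq$. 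This language is generated by the grammar $G_{\prec_i}$ with variables $S,T$, alphabet $Σ$, and rules $\{S\to T^{i-1}\,c\,T^{n-1}\,d\,T^{n-i}\mid c,d\in Σ,\ c\prec d\}$ together with $\{T\to c\mid c\in Σ\}$. As in Theorem~\ref{theorem:Lwwn}, the distinguished symbols sit at index $i$ (position $i$ of $x$) and index $n+i$ (position $i$ of $y$) of the concatenation $x\,y$, so each $S$-rule has a right-hand side of length $2n$; since $\len{Σ}$ is fixed there are only $\mathcal{O}(1)$ such rules, hence $G_{\prec_i}$ has size $\mathcal{O}(n)$.

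The heart of the argument is the observation, read off directly from the definition of $\preceq^*$ on equal-length words, that $x\preceq^* y$ holds if{}f either $x=y$, or the least position $p$ at which $x$ and $y$ differ satisfies $(x)_p\prec(y)_p$. Translating the phrase ``agree on all positions before $p$ and are strictly ordered at $p$'' into the languages above yields
\[
L_{\preceq}(n)=\Bigl(\bigcap_{i=1}^{n}L_{=_i}(n)\Bigr)\cup\bigcup_{p=1}^{n}\Bigl(\bigcap_{j=1}^{p-1}L_{=_j}(n)\ \cap\ L_{\prec_p}(n)\Bigr)\enspace,
\]
where the first disjunct captures $x=y$ and the $p$-th term of the union captures the case in which $p$ is the first differing position (for $p=1$ the empty intersection is absorbed into $L_{\prec_1}(n)$, which already carries the length constraint $x,y\in Σ^n$). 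This is a boolean combination, using only union and intersection, of the $2n$ languages $L_{=_1}(n),\dots,L_{=_n}(n),L_{\prec_1}(n),\dots,L_{\prec_n}(n)$, each specified by a grammar of size $\mathcal{O}(n)$, which is exactly the claimed form.

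I expect the main obstacle to be bookkeeping rather than genuine difficulty. Two checks matter: first, that the strict-inequality grammar places its two distinguished symbols at the correct indices $i$ and $n+i$ of $x\,y$, so that $L(G_{\prec_i})=L_{\prec_i}(n)$ by construction; and second, that the displayed identity is exactly equivalent to the ``least differing position'' definition of $\preceq^*$. The key point in the latter is that using strict $\prec$ in $L_{\prec_p}(n)$ makes the side condition $(x)_p\neq(y)_p$ redundant, so the disjuncts are mutually exclusive and jointly exhaustive, and no complementation is needed.
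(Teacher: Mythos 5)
Your proof is correct and follows essentially the same route as the paper: both decompose $\preceq^*$ via the first differing position, reuse the equality languages $L_{=_i}(n)$ of Theorem~\ref{theorem:Lwwn}, and introduce $n$ per-position comparison languages with grammars of size $\mathcal{O}(n)$, for $2n$ languages in total. The only difference is cosmetic: you use strict languages $L_{\prec_i}(n)$, which makes the combination positive (unions and intersections only), whereas the paper uses non-strict languages $L_{\preceq_i}(n)$ and enforces strictness with the clause $w \notin L_{=_i}(n)$ --- both are legitimate boolean combinations of $2n$ languages of the required size.
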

\begin{proof}
Let $G_{=_i}$ be the grammars used in the proof of Theorem \ref{theorem:Lwwn} and let $G_{\preceq_i}$ a grammar for the language $L_{\preceq_i}(n)\stackrel{\rm\scriptscriptstyle def}{=}\{x\;y \mid x,y \in Σ^n, \; (x)_i \preceq (y)_i\}$.
Then, by definition of the order $\preceq$ over $Σ^n$, we write
\[
	w ∈ L_{\preceq}(n) \Leftrightarrow w ∈ L_{=_{1..n}}(n)
	\bigvee_{i=1}^n \left( w ∈ L_{=_{1..i-1}}(n) \land w \notin L_{=_i}(n) \land w ∈ L_{\preceq_i}(n) \right)
\]
where \(w ∈ L_{=_{1..i}}(n)\) is equivalent to \(w ∈ \bigcap_{j=1}^i L_{=_j}(n)\).

The size of each grammar $G_{=_i}$ was shown to be $\mathcal{O}(n)$.
On the other hand, each grammar $G_{\preceq_i}$ is defined by the alphabet \(Σ\), variables $S$, $T$ and $\{T_a\mid a \in Σ\}$ and the rules:
\[
\{ S → T^{i-1}\; a\; T^{n-1}\; T_a\; T^{n-i} \mid a \in Σ\} \quad \{ T_a → c \mid
c ∈ Σ, a \preceq c \} \quad \{ T → c \mid c ∈ Σ \} \enspace .
\]
It is routine to check that the size of the grammar is $\mathcal{O}(n)$.
\qed
\end{proof}

The language $L_{\preceq}(n)$ can be extended to describe the situation in which $x$ and $y$ represent dates and $\preceq$ means “earlier than”.
To this end, whenever the month is given by its name instead of the number thereof we should read it as a single symbol, considering each one as an element of the alphabet.
Otherwise, a comparison between numbers as described in proof of Theorem \ref{theorem:Lleq} will work.
Once we know how to compare the years, months, and days of two dates, combining them to construct the language comparing two dates is straightforward.

\section{Practical Evaluation}\label{sec:practicality}

The results given in Sections~\ref{sec:ContentLength} and
\ref{sec:Comparisons} characterize the extent to which (intersections
of) CFL can be used to specify common idioms of network protocols.  In
this section, we demonstrate that the positive theoretical results can
be turned into practical input validators for real-world network
protocols. We begin by discussing practical encoding issues, before we
present an input validator for HTTP.

\subsection{Encoding Real-World Protocols as CFG}

\subsubsection{Encoding effort.}
The manual effort of translating protocol specification into grammars is facilitated by the RFC format:
Protocol RFCs usually consist of a grammar accompanied by a list of additional constraints written in English.
This grammar is typically given in ABNF format~\cite{rfc5234} which easily translates to a context-free grammar.
The additional constraints translate to regular expressions or CFGs,
along the lines described in this paper.
Then the set of valid messages of the protocol is described by a boolean combination of small CFLs.  

\subsubsection{Encoding size.}
The grammars required to perform the validation against the idioms
discussed in this paper remain small even for real-world protocols:

\paragraph{Length Fields.}
The CFG for $L_{\mathit{len}}(n)$ consists of \(3n+2\) rules, i.e. it
grows linearly in the size of the length field. This implies that it
grows only {\em logarithmically} with the size of the message body,
which makes the CFG encoding practical for real-world scenarios.
\paragraph{Comparisons.}
To compare two strings of length \(n\) we need \(2n\) grammars each with no more than \(3 \len{\Sigma}\) rules where \(\len{\Sigma}\) is the size of the alphabet.
In practice, \(n\) is small because it is the length of the encoding of a position within a file, a timestamp, a hash value,…

\subsection{An Input Validator for HTTP}

Next we report on HTTPValidator~\cite{HTTPValidator}, a proof of concept implementation to validate HTTP messages based on mere CFGs and regular expressions, without using attributes nor semantic actions.

\paragraph{Why HTTP?}
First, HTTP contains almost all of the features that have been used in the literature \cite{conf/ndss/BorisovBWDJG07,conf/imc/PangPSP06} to dispute the suitability of CFLs for parsing network protocols.  
Second, HTTP is a widely used and complex protocol, making it an ideal testbed for our approach.  
Finally, HTTPolice~\cite{HTTPolice} is a lint for HTTP messages which checks them for conformance to standards and best practices and provides a reference for comparison.

\paragraph{HTTP as CFG.}
The ABNF described by the standard~\cite{rfc7230} is translated into a single CFG while constraints such as “A client MUST send a Host header field in all HTTP/1.1 request messages.” and “A client MUST NOT send the chunked transfer coding name in TE” are translated into regular expressions and CFGs.

\paragraph{Implementation.}
Regular expressions and grammars are compiled with Flex and Bison respectively.
We avoid conflicts altogether by relying on the \verb!%glr-parser!  declaration, which forces Bison to produce a generalized LR parser \footnote{\url{https://www.gnu.org/software/bison/manual/html_node/GLR-Parsers.html}} that copes with unresolved conflicts without altering the specified language.
Finally, a script runs all these validators sequentially and combines their boolean outputs to conclude the validation.
Table~\ref{table:sizes} describes the sizes of each separate element of our validator.
Further details can be found in the repository~\cite{HTTPValidator}.
\begin{table}[ht]
\caption{Sizes of the formal languages required to validate an HTTP message}
\label{table:sizes}
\centering
\begin{tabular}{lcl}
\hline
\textbf{Feature} & &\textbf{Size} \\
\hline
HTTP ABNF as a CFG & & 1013 grammar rules\\
Decimal length field of size up to 80 & & 871 grammar rules \\
Comparison of version numbers & & 3 grammars with 13 grammar rules each\\
Constraints (91 different ones) & & 260 regular expressions \\
\hline
\end{tabular}
\end{table}

\paragraph{Evaluation.}
We evaluate HTTPValidator on messages obtained from real-world traffic (using Wireshark) and on messages provided with HTTPolice as test cases. 
In total we thus obtain 239 test cases of which HTTPolice classifies 116 as valid and 123 as invalid HTTP. 
We run HTTPValidator on these test cases obtaining the same classification as HTTPolice but for two false positives.
These errors are due to well-formedness checks on message bodies in JSON and XML format, which we currently do not consider in HTTPValidator but HTTPolice does.

The time required for evaluating all test cases \footnote{We run our experiments on an Intel Core i5-5200U CPU 2.20GHz with 8GB RAM.} is 16.1s for HTTPValidator and 60s for HTTPolice, i.e.\ we achieve a 4-fold speedup. 
Note that this comparison is slightly biased towards HTTPValidator because HTTPolice relies on interpreted Python code whereas the parsers in HTTValidator are compiled to native code. 
Moreover, we store each of the test cases in a single file, forgoing HTTPolice's ability to process several HTTP messages in a single file.
On the other hand, we have put ease of implementation before performance so no parallelization has been implemented so far.

Overall, the experimental evaluation shows that, on our testbed, HTTPValidator achieves coverage and performance that is competitive with the state-of-the-art in the field, thereby demonstrating the practicality of our approach.

\section{Related Work}\label{sec:related}

We discussed related work on language theory and input validation in
the paper body. Here we focus on discussing recent efforts for
building parser generators for network protocols.

In recent years, a number of parser generators for network protocols
have emerged. They are often parts of larger projects, but can be used
in a stand-alone fashion. Important representatives are
BinPac~\cite{conf/imc/PangPSP06}, which is part of the Bro Network
Security Monitor~\footnote{\url{https://www.bro.org/}},
UltraPac~\cite{conf/sigcomm/LiXGTCLJL10}, which is part of the
NetShield Monitor, Gapa \cite{conf/ndss/BorisovBWDJG07},
FlowSifter~\cite{conf/infocom/MeinersNLT12}, and Nail
\cite{conf/osdi/BangertZ14}. The difference to our approach is that
they are all are built from scratch, whereas we rely on established
CFG parsing technology. Moreover, they rely on user-provided code for
parsing idioms such as length fields, whereas we specify everything in
terms of (intersections of) CFG. However, we emphasize that the focus
of our approach lies on the task of {\em input validation}, whereas those approaches
deal with {\em parsing}, i.e. they additionally fill a data structure.

Among the previous parser generators, Gapa and Nail stand out in terms
of their safety features.  Gapa achieves a degree of safety by
generating parsers in a memory-safe language. Note that this does not
prevent runtime error, e.g., dividing by zero still remains possible.
Nail also aims at safety by providing some automated support for
filling user-defined data structure therefore reducing the risk of
errors introduced by user-defined code. In contrast, we do not rely on
any user-provided code.

Another line of work \cite{conf/esorics/DavidsonSDJ09} relies on the
use of the so-called attribute grammars, an extension of context-free
grammar that equips rules with attributes that can be accessed and
manipulated.  For the parser generator, the authors use Bison and
encode the attribute aspect of grammars through user-defined C code
annotating the grammar rules which, as we argued before, augments the
risk of errors.

\section{Conclusions and Future Work}\label{sec:conclusions}

Input validation is an important step for defending against malformed
or malicious inputs. 
In this paper we perform the first rigorous, language theoretic study of the expressiveness required for validating a number of common protocol idioms. 
We further show that input validation based on formal languages is practical and build a modular input validator for HTTP from
dependable software components such as off-the-shelf parser generators for context-free languages. 
Our experimental result shows that our approach is competitive with the state-of-the-art input validator for HTTP in terms of coverage and speed.

There are some promising avenues for extending our work. 
For instance our approach can be generalized to boolean closures of CFLs, which are known to be strictly more expressive than the finite intersection we deal with in this paper~\cite{brough2010groups}.  
Besides, our approach can be extended with a notion of state that is shared between protocol participants which will allow us to implement, e.g., stateful firewalls using our approach.

\paragraph{Acknowledgments}
We thank Juan Caballero for feedback on an early version of this paper.
This work was supported by Ram{\'o}n y Cajal grant RYC-2014-16766,
Spanish projects TIN2015-70713-R DEDETIS, TIN2015-71819-P RISCO,
TIN2012-39391-C04-01 StrongSoft, and Madrid regional project
S2013/ICE-2731 N-GREENS.

\bibliographystyle{abbrv}

\appendix

\section{Deferred Proofs}\label{sec:proofs}

\subsection{Proof of Lemma~\ref{lemma:L_cap}}

\begin{proof}[of Lemma~\ref{lemma:L_cap}]
According to Proposition \ref{prop:equivalents}, if $\strangeL$ is a finite intersection of CFLs then the set $f_w^{-1}(\strangeL)$ is semilinear, which implies that it is a finite union of linear sets. Since the set is infinite, at least one of the linear sets must be infinite. We call this set $S$.

Any element in $S$ will have the form defined in \ref{eq:LinearSetFormula} so define $\vec{x},\vec{y}\in S$ as follows:
\[\vec{x} = \left(i_x, 1,\frac{b^{i_x}-1}{b-1} \right), \;\;\;\; \vec{y} = \left(i_y,1,\frac{b^{i_y}-1}{b-1} \right) \enspace . \]

Let $δ=i_y - i_x$, we write
\[\vec{y} = \left(i_x + δ,1,\frac{b^{i_x+δ}-1}{b-1} \right)\enspace .\]
Without loss of generality, assume $δ>0$ (for otherwise swap $\vec{x}$ and $\vec{y}$). Next define
\[Δ = \vec{y} - \vec{x} = \left(δ, 0,\frac{b^{i_x}(b^δ-1)}{b-1} \right) \enspace .\]

Let \( \vec{z} = \vec{x} + 2 Δ \):
\begin{equation}\label{eq:vectorAsIncrement}
\vec{z}  = \left(i_x + 2δ,1,\frac{b^{i_x} -1 + 2b^{i_x}(b^δ-1)}{b-1} \right) = \left(i_x + 2δ,1,\frac{b^{i_x}(2b^δ-1) -1}{b-1} \right)\enspace .
\end{equation}

Since the set $S$ is linear we find that \(\vec{z} \in S\) so we can write:
\[\vec{z}= \left(i_z,1, \frac{b^{i_z}-1}{b-1} \right) \]
which means that $i_z = i_x+2δ$ obtaining:
\begin{equation}\label{eq:vectorCorrect}
\vec{z} = \left(i_x+2δ, 1,\frac{b^{i_x+2δ}-1}{b-1} \right)\enspace .
\end{equation}

Let us now derive a contradiction using \eqref{eq:vectorAsIncrement} and \eqref{eq:vectorCorrect}.
For this, we start by deriving the following equivalence:
\[b^{i_x}(2b^δ-1) = b^{i_x+2δ} \text{ if{}f } 2b^δ-1 = b^{2δ} \enspace .\]
Applying the change of variable $t=b^δ$ we obtain
\[2t-1=t^2 \text{ if{}f } t^2-2t+1 = 0 \enspace .\]
Solving the equation give us \(t = 1 \text{ if{}f } δ=0\).

So the vector $\vec{z}$ obtained basing on the linearity of the set $S$ will belong to $S$ if and only if $\vec{z}=\vec{y}=\vec{x}$.

We conclude that the set $S$ is not infinite, in fact, if it is linear then it contains only one element. Thus the set $f_w^{-1}(\strangeL)$ can not be written as a finite union of linear set so the language $\strangeL$ is not a finite intersection of CFLs.
\qed
\end{proof}

\subsection{Proof of Lemma~\ref{lemma:constraint}}
\begin{proof}[of Lemma~\ref{lemma:constraint}]
This proof is conceptually identical to the previous one but requires a slight modification in the notation used.
Now, vectors in $S$ have 7 components and the linear combination of the last five ones equals to the \emph{old} third component.

Thus, vector $\vec{x} \in S$ should be defined as:
\[\vec{x} = \left(i_x,1,n_x^{(1)},n_x^{(2)},n_x^{(3)},n_x^{(4)},n_x^{(5)}\right) \;\; \text{ with } \frac{b^{i_x}-1}{b-1}=\sum_{k=1}^5\len{α_k}\times n_x^{(k)} \enspace .\]

Having this definition, the proof follows that of Lemma~\ref{lemma:L_cap}, considering the sum of the last five components rather than their concrete values.
Assuming $L_{\rightthreetimes}$ is a finite intersection of context-free languages, there exist three vectors $\vec{x}, \vec{y},\vec{z} \in S$ such that $\vec{z} = \vec{x} + 2Δ$ where $Δ = \vec{y}-\vec{x}$.
For the sum of the last five components of vectors $\vec{z}$ and $\vec{x} + 2Δ$ to be equal we require $\vec{x}, \vec{y}$ and $\vec{z}$ to have the same first component ($δ = 0$).  

Sharing the first component implies that each linear set, $S$, relates (by $f_w^{-1}$) only with vectors with the same first component.
Therefore assuming $f_w^{-1}(L_{\rightthreetimes})$ is as a finite union of linear sets requires $L_{\rightthreetimes}$ to only contain words with a finite number of different lengths, which contradicts the fact that it is infinite.
By Proposition \ref{prop:equivalents} we conclude that $L_\rightthreetimes$ is not a finite intersection of CFLs.

\qed
\end{proof}

\subsection{Proof of Theorem~\ref{theorem:chunked}}
\begin{proof}[of Theorem~\ref{theorem:chunked}]
  Let
  $a ∈ Σ$ be such that \(a≠1\) and \(a≠\sharp\). One can easily see
  that
  \[\strangeL \{\sharp\} \stackrel{\rm\scriptscriptstyle def}{=} L_{\mathit{chunk}}^{\sharp} \cap 1^*\sharp^*a^*\sharp^* \enspace .\]

Assuming $L_{\mathit{chunk}}^{\sharp}$ is a finite intersection of CFLs then we find that so is $\strangeL \{\sharp\}$.

Let $\bar{w}=\tuple{1,\sharp,a,\sharp}$, then
\[f_{\bar{w}}^{-1}(\strangeL \{\sharp\}) = \left\{\bigl(i,1,\textstyle{\frac{b^{i}-1}{b-1}},1\bigr) \mid i\ ∈ \mathbb{N} \right\}\enspace .\]
From there, it is straightforward to extend the result of Lemma \ref{lemma:L_cap} to show that $\strangeL \{\sharp\}$ is not a finite intersection of CFLs, hence derive a contradiction.
\qed
\end{proof}

\subsection{Proof of Theorem~\ref{theorem:economychunked}}

\begin{proof}[of Theorem~\ref{theorem:economychunked}]
Recall the grammar $G_{\mathit{len}}(n)$ for the language $L_{\mathit{len}}(n)$.
The grammar $G_{\mathit{chunk}}(n)$ is defined by adding to \(G_{\mathit{len}}(n)\) a fresh start variable \(Z\) and the following two rules: \(\{ Z → S\; \sharp \; Z, Z → S \; \sharp\} \).
Clearly, \(L(G_{\mathit{chunk}}(n))=L_{\mathit{chunk}}^{\sharp}(n)\) and the size of $G_{\mathit{chunk}}(n)$ is $\mathcal{O}(n)$ since the size of \(G_{\mathit{len}}(n)\) is \(\mathcal{O}(n)\) due to Theorem~\ref{theorem:compactboundedcontentlength}

\qed
\end{proof}

\subsection{Proof of Theorem~\ref{theorem:Lww}}

\begin{proof}[of Theorem~\ref{theorem:Lww}]
We first set up languages and a GSM mapping to take proper care of the delimiter. 
Then we proceed with the proof of theorem.
Let $L^{\sharp}_{(k)}$ be the language obtained by adding the delimiter to $L_{(k)}$, that is
\[L^{\sharp}_{(k)} = \{a_1^{i_1}a_2^{i_2}\ldots a_k^{i_k}\sharp a_1^{i_1}a_2^{i_2}\ldots
a_k^{i_k} \mid i_j \geq 0 \text{ for all } j \}\enspace .\]
Now given \(k>0\) define%
\footnote{Given \(L\) over \(Σ\), let \(\overline{L}\) denotes the complement of \(L\), that is \(Σ^* \setminus L\).}
\begin{equation}\label{eq:lk}
L_{\mathit{aux}}=L^{\sharp}_{(k)} \cap \overline{\{a_i^j\sharp z \mid 1 \leq i \leq k,\; j\geq 0,\; z\in Σ^*\}} \enspace .
\end{equation}

Intuitively, \(L_{\mathit{aux}}\) is obtained from \(L^{\sharp}_{(k)}\) by filtering out the words where at most one symbol occurs to the left of the delimiter, e.g. \(a_3 a_3 \# a_1 a_2 a_3\) is such a word whereas
\(a_1 a_2 \# \) is not.
Observe that the set \(\{a_i^j\sharp z \mid 1 \leq i \leq k,\; j\geq 0,\; z\in Σ^*\}\) is regular, and so is its complement.

Next, we define the GSM mapping $π$ which intuitevely inserts the delimiter at the “right” place. 
Unless stated otherwise, when describing $π$, we assume the GSM outputs exactly what it reads.  

The GSM reads the first symbol, say $a_i$, and moves to a state \(q_i\) keeping track of that symbol. 
Notice that the first symbol need not be the delimiter following equation \eqref{eq:lk}.
Then \(π\) keeps reading the same $a_i$ symbol without leaving \(q_i\).
Upon reading some symbol \(x ≠ a_i\), \(π\) moves to a different state wherein it keeps reading symbols $x ≠ a_i$ until it reads $a_i$ again.
At that point, \(π\) has reached the middle of the word and so it outputs $\sharp$ (besides $a_i$) and keeps processing the remainder of the word.

From above, we find that \(π^{-1}(L_{\mathit{aux}})\) coincides with \(L_{(k)}\) minus the words where at most one symbol occurs, i.e. \(\{ (a_ia_i)^j \mid 1\leq i\leq k,\; j\geq 0\}\).

Now we reinstate these words and obtain the following equality:
\[L_{(k)}= π^{-1}(L_{\mathit{aux}}) \cup \{ (a_ia_i)^j \mid 1\leq i\leq k,\; j\geq 0\} \enspace .\]

We are in position to prove the statement of the theorem.

It is easy to see that the following equality holds:
\[
L_{(k)}^{\sharp}=L_{=}^{\sharp}\cap a_1^*a_2^*\ldots a_k^*\sharp a_1^*a_2^*\ldots a_k^*
\]
Now assume $L_{=}^{\sharp}$ is a finite intersection of \(m\) context-free
languages. 
Then $L_{\mathit{aux}}$ is a finite intersection of $m+2$ CFLs by equation~\ref{eq:lk}. 
Proposition~\ref{prop:closegsm} \upshape(\itshape i\upshape) shows that $π^{-1}(L_{\mathit{aux}})$ is a finite intersection of $m+2$ CFLs, hence we find that $L_{(k)}$ is an intersection of $m+2$ CFLs by Proposition~\ref{prop:closegsm} \upshape(\itshape ii\upshape).
Notice that this holds for all values of \(k\) and in particular for $k>m{+}2$ which contradicts Theorem~\ref{theorem:Km1Intersection}.
\qed
\end{proof}

\section{General equality checks}\label{sec:dynamicDelimitersAppendix}
To specify general equality checks we consider the following language over $Σ$ where \(n\) is the upper bound on the size of word to check equality for:
\[
  L_{d}(n) \stackrel{\rm\scriptscriptstyle def}{=} \{w \; x \; w \; y_1 … w \; y_k \; w \; z\; \mid \len{w}=n ∧ x∈ L_{\mathit{fm}} ∧ \bigwedge_{j} y_j ∈ L_{\mathit{mm}} ∧ z∈ L_{\mathit{lm}} \} 
\]
where $L_{\mathit{fm}}$, $L_{\mathit{lm}}$ and \(L_{\mathit{mm}}\) are CFLs.

It is routine to check that \(L_{d}(n)\) is context-free for every value of \(n\).
However, Theorem~\ref{theorem:Language_square_words} shows there is no concise specification for it.
Therefore, we study the size of specifications given as a finite intersection of CFLs.

\begin{theorem}
Assuming a fixed size grammar specification for the languages \(L_{\mathit{fm}}\), \(L_{\mathit{mm}}\) and \(L_{\mathit{lm}}\), the language $L_{d}(n)$ over fixed size alphabet \(Σ\) is a finite intersection of \(n\) CFLs, each one specified by a grammar of size $\mathcal{O}(n)$.
\end{theorem}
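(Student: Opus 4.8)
The plan is to mirror the proof of Theorem~\ref{theorem:Lwwn}: instead of testing equality of two contiguous blocks, I would test, for each position $i \in \{1,\dots,n\}$, that all of the length-$n$ delimiter copies occurring in a candidate word agree on their $i$-th symbol. Concretely, for each $i$ I would define the language $L_{d_i}(n)$ consisting of all words of the shape $w_1\, x\, w_2\, y_1 \cdots w_{k+1}\, y_k\, w_{k+2}\, z$ with $w_1,\dots,w_{k+2} \in \Sigma^n$, $x \in L_{\mathit{fm}}$, each $y_j \in L_{\mathit{mm}}$, $z \in L_{\mathit{lm}}$, and $(w_1)_i = (w_2)_i = \cdots = (w_{k+2})_i$. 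Since a word has all its delimiter copies equal exactly when they coincide at every position, the goal reduces to the set equality $L_{d}(n) = \bigcap_{i=1}^n L_{d_i}(n)$, which already exhibits $L_d(n)$ as an intersection of $n$ languages.

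Next I would show each $L_{d_i}(n)$ is context-free with a grammar of size $\mathcal{O}(n)$. The idea is to commit at the start symbol to the value $c \in \Sigma$ shared by the $i$-th symbol of every block, and to propagate $c$ through the whole derivation. For each $c$ I use a block variable $B_c$ with the single rule $B_c \to T^{i-1}\, c\, T^{n-i}$ (where $T \to a$ for every $a \in \Sigma$ generates an arbitrary symbol), together with $S \to B_c\, X\, M_c$ and $M_c \to B_c\, Y\, M_c \mid B_c\, Z$, where $X$, $Y$, $Z$ are the start symbols of the assumed fixed-size grammars for $L_{\mathit{fm}}$, $L_{\mathit{mm}}$, $L_{\mathit{lm}}$ (their variables renamed apart). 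This derives one first block followed by $x$, then a run of blocks separated by the $y_j$, and a final block followed by $z$, each block carrying $c$ in position $i$. Counting rules as in the proof of Theorem~\ref{theorem:compactboundedcontentlength}, the $\len{\Sigma}$ block rules of length $n{+}1$ dominate and give size $\mathcal{O}(n)$, the remaining rules and the three fixed-size grammars contributing only $\mathcal{O}(1)$.

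The easy inclusion is $L_d(n) \subseteq \bigcap_i L_{d_i}(n)$: any decomposition witnessing membership in $L_d(n)$ witnesses membership in every $L_{d_i}(n)$ with $c = (w)_i$. I expect the \emph{main obstacle} to be the reverse inclusion $\bigcap_i L_{d_i}(n) \subseteq L_d(n)$. Membership in $L_{d_i}(n)$ only guarantees \emph{some} decomposition whose blocks agree at position $i$, and a priori the per-position witnesses may place the delimiter copies at different offsets (the filler words have variable length), so that agreement at each single position need not assemble into one decomposition with fully equal blocks. The crux is therefore to argue that the placement of the delimiter copies is forced, i.e. that the shape is uniquely decodable; this is exactly the role the delimiter plays in the motivating MIME setting, where the boundary string cannot occur inside a part. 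I would make this hypothesis explicit (no filler word admits a realignment of a delimiter copy), and then show that the first block is pinned to the length-$n$ prefix and each subsequent boundary is thereby determined, forcing all per-position witnesses onto a single decomposition; agreement at every $i$ then collapses to equal blocks and hence membership in $L_d(n)$.
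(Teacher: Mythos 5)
You follow the paper's strategy exactly---decompose the equality check into per-position languages and intersect over the $n$ positions---but your grammar is in fact a \emph{corrected} version of the one the paper gives. The paper uses unindexed variables $P$ and $Q$ with rules $\{P \to T^{i-1}\, c\, T^{n-i} \mid c \in \Sigma\}$ and $Q \to P\, S_{\mathit{mm}}\, Q \mid P\, S_{\mathit{mm}}$, so every occurrence of $P$ chooses its symbol $c$ independently; as written, that grammar generates the bare shape language $\Sigma^n L_{\mathit{fm}} (\Sigma^n L_{\mathit{mm}})^+ \Sigma^n L_{\mathit{lm}}$ and enforces no agreement at position $i$ whatsoever. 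Your $c$-indexed variables $B_c$ and $M_c$, which commit to $c$ in the start rule and thread it through the recursion, are precisely the repair needed, and your size count ($\mathcal{O}(n)$ for fixed $\Sigma$, dominated by the $\len{\Sigma}$ block rules of length roughly $n$) is correct.

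The obstacle you flag for the inclusion $\bigcap_{i=1}^n L_{d_i}(n) \subseteq L_d(n)$ is not excess caution: it is a genuine gap, and it sits in the paper itself, which waves the set equality through with ``it is easily seen''. Without an assumption on the fillers the equality is simply false. Take $n=2$, $\Sigma=\{a,b\}$, and $L_{\mathit{fm}} = L_{\mathit{mm}} = L_{\mathit{lm}} = \Sigma^*$ (which has a fixed-size grammar), and consider $u = ab\, b^{2t} a^{2t}$ with $t \geq 2$. For $i=2$ one can place all blocks after the first inside the run of $b$'s, and for $i=1$ inside the run of $a$'s, so $u$ lies in every per-position language; but $ab$ occurs in $u$ only at position $1$, so no single decomposition has all blocks equal, and $u \notin L_d(2)$. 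Hence some hypothesis in the spirit of your unique-decodability condition is \emph{necessary} for the intersection to collapse to $L_d(n)$, and under such a hypothesis your forcing argument (the first block is pinned to the length-$n$ prefix, each later boundary is then determined, so all per-position witnesses coincide) does complete the proof. Be aware, however, that this proves a statement with strictly stronger hypotheses than the theorem as written: for arbitrary filler CFLs with fixed-size grammars the claim is not established---neither by your argument nor by the paper's.
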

\begin{proof}
  Given $i \in \{1,\ldots, n\}$, define the language $L_{d_{=_i}}(n)$ over alphabet $Σ$ given by:
\begin{multline*}
  \{w_0 \; x \; w_1 \; y_1 … w_k \; y_k \; w_{k+1} \; z\; \mid \bigwedge_{\ell} (\len{w_\ell}=n ∧ (w_{\ell})_i = (w_{\ell+1})_i) \\
  ∧ x ∈ L_{\mathit{fm}} ∧ \bigwedge_{j} y_j ∈ L_{\mathit{mm}} ∧ z ∈ L_{\mathit{lm}} \} \enspace .
\end{multline*}

It is easily seen that give a word \(u\) we have \(u\in L_{d}(n)\) if{}f \(u \in L_{d_{=_i}}(n)\) for all \(i\in\{1,…,n\}\).

Let $S_{\mathit{fm}}$, $S_{\mathit{mm}}$ and $S_{\mathit{lm}}$ be the start symbols of the grammars defining languages $L_{\mathit{fm}}$, $L_{\mathit{mm}}$ and $L_{\mathit{lm}}$ respectively.
Next, define $G_{d_{=_i}}(n)$ as the grammar for \(L_{d_{=_i}}(n)\) with alphabet \(Σ\), set of variables $\{S, T, P, Q, S_{\mathit{fm}}, S_{\mathit{mm}}, S_{\mathit{lm}}\}$ and rules:
\begin{align*}
\{ S  &→ P \; S_{\mathit{fm}} \; Q \; P \; S_{\mathit{lm}} \} & \{ T &→ c \mid c ∈ Σ \} \\
\{P &→ T^{i-1}\; c\; T^{n-i}\mid c \in Σ\} & \{Q &→ P \; S_{\mathit{mm}} \; Q\} & \{Q &→ P \; S_{\mathit{mm}} \} \enspace .
\end{align*}

\(G_{d_{=_i}}(n)\) also contains all variables and rules of the grammars specifying languages $L_{\mathit{fm}}$, $L_{\mathit{mm}}$ and $L_{\mathit{lm}}$.
Because the size of the grammars defining the languages $L_{\mathit{fm}}$, $L_{\mathit{mm}}$ and $L_{\mathit{lm}}$ is fixed and independent of $n$, it is routine to check that the size of \(G_{d_{=_i}}(n)\) is $\mathcal{O}(n)$.
\qed
\end{proof}

\end{document}